\pgfplotsset{/pgf/number format/use comma,compat=newest}
\newcommand{\newinf}{\mathop{\mathrm{inf}\vphantom{\mathrm{sup}}}}
\newcommand{\E}{\mathbb{E}}
\newcommand{\R}{\mathbb{R}}
\newcommand{\N}{\mathbb{N}}
\renewcommand{\d}{\mathrm{d}}
\newcommand{\virg}[1]{``#1''}
\newcommand\iid{\mathrel{\stackrel{\makebox[0pt]{\mbox{\normalfont\tiny iid}}}{\sim}}}
\newcommand\aseq{\mathrel{\stackrel{\makebox[0pt]{\mbox{\normalfont\tiny a.s.}}}{=}}}
\theoremstyle{plain} 
\newtheorem{theorem}{Theorem}
\newtheorem{lemma}[theorem]{Lemma} 
\newtheorem{proposition}[theorem]{Proposition} 
\theoremstyle{definition} 
\newtheorem{definition}{Definition}
\theoremstyle{remark} 
\newtheorem{remark}{Remark}
\title{The solution of the deep Boltzmann machine on the Nishimori line}
\author[1]{Diego Alberici}
\author[2]{Francesco Camilli}
\author[2]{Pierluigi Contucci}
\author[2]{Emanuele Mingione}
\affil[1]{Communication Theory Laboratory, EPFL, Switzerland}
\affil[2]{Dipartimento di Matematica, Università di Bologna, Italy}
\newcommand{\be}{\begin{equation}}
\newcommand{\ee}{\end{equation}}
\begin{document}

\maketitle

\begin{abstract}
    The deep Boltzmann machine on the Nishimori line with a finite number of layers is exactly solved by a theorem that expresses its pressure through a finite dimensional variational problem of \emph{min-max} type. In the absence of magnetic fields the order parameter is shown to exhibit a phase transition whose dependence on the geometry of the system is investigated.
\end{abstract}

\textbf{keywords}: Deep Boltzmann machines, Nishimori line, Multi-species spin-glass, Replica symmetry\\

\section{Introduction}
A deep (restricted) Boltzmann machine can be considered as a special case of the mean field multi-species spin glass model introduced in \cite{MSK_original}, further studied in \cite{bates,panchenko_multi-SK}. Specifically the set of spins is arranged into a geometry made of consecutive layers and only interactions among spins belonging to adjacent layers are allowed. In particular intra-layer interactions are forbidden. Such architectural assumption makes it impossible to fulfill the positivity hypothesis under which the results of \cite{MSK_original,panchenko_multi-SK} were obtained. In fact the positivity property, encoded in an elliptic condition, requires dominant intra-group interaction with respect to inter-group ones. 
While the general deep (restricted) Boltzmann machine is still an unsolved problem (see nevertheless \cite{Alberici2020AnnealingAR,noi_deep2,AuChe,baik2020,bgg,genovese,mourrat20,mourrat200} for centered Gaussian interactions), we present here its exact and rigorous solution in a subregion of the phase space known as Nishimori line. In a previous paper \cite{us} we have fully solved the elliptic multi-species model on the Nishimori line, where the property of replica symmetry, \textit{i.e.} the concentration of the overlap, was shown to hold. Such property indeed is fully general on the Nishimori line, see \cite{jean-dmitry} on this respect, and does not rely on any positivity assumption of the interactions. 
While the positivity properties carry with them the typical bounds of Guerra's method \cite{Guerra_upper_bound,interp_guerra_2002}, here the technical support to control and solve the model is based on the presence, on the Nishimori line, of a set of identities relating magnetizations and overlaps expectations \cite{Nishi_id_PC,Hide_original,nishimori01} and correlation inequalities \cite{contucci_morita_nishimori}. Our work provides the first exact solution of a disordered Statistical Mechanics model in a deep architecture and describes how the relative size of the layers influences the phase transition.

The relevance of the Nishimori line is twofold. On one side it provides the possibility to investigate the replica symmetric phase of the model through an exact solution for arbitrary strength of the interactions. On the other side it represents a bridge between a class of inference problems and Statistical Physics \cite{nishimori01}. For instance the Sherrington-Kirkpatrick model on the Nishimori line corresponds to the Wigner Spiked model in the inference Bayesian optimal setting with binary signals \cite{adaptive,wigner-wishart}. Analogously, any multi-species mean-field model on the Nishimori line can be seen as a spatially coupled Wigner spiked model first introduced and studied in \cite{Lenka,Jean-lenka2}. From the inference point of view here we deal with a deep spatially coupled Wigner spiked model with $K$ layers, which in the case $K=2$ coincides with the Wishart model (rank-one non-symmetric matrix estimation \cite{wigner-wishart}).

The paper is organized as follows. In Section \ref{def&results} we introduce the model and we present the main results in three theorems. Section \ref{preliminary_sect} is a collection of tools and preliminary results, starting form the Nishimori identities and the correlation inequalities, up to the adaptive interpolation method. The proofs are contained in Section \ref{proofs_sect} and Section \ref{sect_five} collects some conclusions and perspectives.

\section{Definitions and results}\label{def&results} 
Consider a set of sites with cardinality $N$, divide it into $K$ disjoint subsets, called \emph{layers} and denoted by $\{L_r\}_{r=1,\dots,K}$ with cardinality $|L_r|=N_r$ and $\sum_{r=1}^K N_r=N$. To each site $i$ we associate an Ising spin $\sigma_i$ and we denote $\sigma=(\sigma_1,\dots,\sigma_N)$ a configuration of spins belonging to the space $\Sigma_N=\{+1,-1\}^N$. The Hamiltonian of the model is defined as:
\begin{align}
    \label{H_DBM}
    &H_N(\sigma):=-\sum_{r,s=1}^K\sum_{(i,j)\in L_r\times L_s}\tilde{J}_{ij}^{rs}\sigma_i\sigma_j
    -\sum_{r=1}^K\sum_{i\in L_r}\tilde{h}^r_i\sigma_i
\end{align}
where the interaction coefficients and the external fields are independent Gaussian random variables distributed as follows
\begin{align}
    \label{gaussian_couplings_NL}
    &\tilde{J}_{ij}^{rs}\iid
    \mathcal{N}\left(\frac{\mu_{rs}}{2N},\frac{\mu_{rs}}{2N}\right),\;\quad \tilde{h}^r_i\iid
    \mathcal{N}(h_r,h_r)\,,
\end{align}
and the matrix $\mu:=(\mu_{rs})_{r,s=1,\dots,K}$ and the vector $\mathbf{h}:=(h_r)_{r=1,\dots,K}$ have non-negative entries. Furthermore $\mu$ has the following tridiagonal structure:
\begin{align}\label{structure_interactions_DBM}
    \mu=\begin{pmatrix}
    0 & \mu_{12} & 0 &  \cdots& 0 \\
    \mu_{21}& 0 & \mu_{23}&\cdots &0 \\
    0 & \mu_{32}& 0 & \ddots& 0 \\
     \vdots& \vdots & \ddots & \ddots& \mu_{K-1,K}\\
    0& 0&0 &\mu_{K,K-1} & 0 
    \end{pmatrix}\,
\end{align}
and is assumed to be symmetric without loss of generality. 
The geometrical architecture of the model is illustrated in \figurename\ref{mainfig}.

\begin{figure}[h]
    \centering
    \includegraphics[width=.9\textwidth]{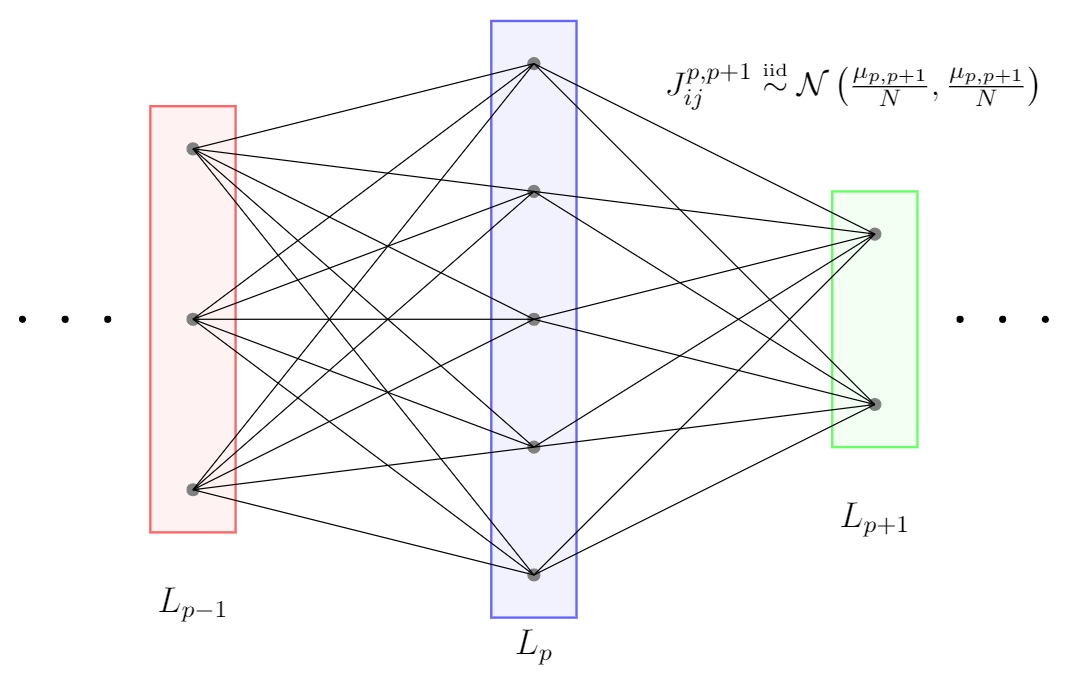}
    \caption{Graph of the interactions between layers.}
    \label{mainfig}
\end{figure}
We point out that the very special choice of the Gaussian distribution \eqref{gaussian_couplings_NL}, having mean values and variances tied to be the same, is called \textit{Nishimori line} in Physics literature \cite{nishimori01}. We will recall correlation identities and inequalities holding on the Nishimori line in the next Section.


We denote
\begin{align}
    &m_r(\sigma):=\frac{1}{N_r}\sum_{i\in L_r}\sigma_i\,,\quad\qquad q_r(\sigma,\tau):=\frac{1}{N_r}\sum_{i\in L_r}\sigma_i\tau_i \;;\\[4pt]
    &\mathbf{m}(\sigma):=(m_r(\sigma))_{r=1,\dots,K}\,,\quad
    \mathbf{q}(\sigma,\tau):=(q_r(\sigma,\tau))_{r=1,\dots,K}
\end{align}
with bold characters here and below standing for vectors and $\sigma,\tau\in\Sigma_N=\{-1,1\}^N$.
We also set
\begin{align}
    \Delta:=(\alpha_r\mu_{rs}\alpha_s)_{r,s=1,\dots,K}\,,\;\quad\hat{\alpha}:=\text{diag}(\alpha_1,\alpha_2,\dots,\alpha_K)\,,\;
\end{align}
where $\alpha_r=N_r/N$ are called the \emph{form factors}. $\Delta$ is the \emph{effective interaction matrix} and encodes all the information on the interactions of the system.  For later convenience we introduce also the matrix
\begin{align}\label{M}
M := (\mu_{rs}\alpha_s)_{r,s=1,\dots,K}
\end{align}
We notice that $\Delta$ and $M$ are tridiagonal matrices too.

It is useful to express the Hamiltonian \eqref{H_DBM} in terms of centered Gaussian random variables plus a deterministic term (in vector notation):
\begin{align}
    \label{H_MSK_NL-bis}
    H_N(\sigma)=-\frac{1}{\sqrt{2N}}\sum_{r,s=1}^K\sum_{(i,j)\in L_r\times L_s}J_{ij}^{rs}\sigma_i\sigma_j-\sum_{r=1}^K\sum_{i\in L_r}h^r_i\sigma_i-\frac{N}{2}(\mathbf{m},\Delta\mathbf{m})-N(\hat{\alpha}\mathbf{h},\mathbf{m})
\end{align}
where $\left(\cdot,\cdot\right)$ denotes the Euclidean inner product in $\mathbb{R}^K$ and
\begin{align}
    J_{ij}^{rs}\iid\mathcal{N}\left(0,\mu_{rs}\right),\;\quad
    h_i^r\iid\mathcal{N}(0,h_r)\;.
\end{align}
The random term in \eqref{H_MSK_NL-bis} corresponds to the Hamiltonian studied in \cite{noi_deep2}, but the addition of a deterministic part changes the properties of the model.

We denote the random pressure per particle by
\begin{align}
    \label{press}
    &p_N:=\frac{1}{N}\log\sum_{\sigma\in\Sigma_N}\exp\left(-H_N(\sigma)
    \right)
\end{align}
and its quenched average by
\begin{align}
    \label{Q_press}
    &\bar{p}_N(\mu,\mathbf{h}):=\mathbb{E}\,p_N \,, 
\end{align}
where $\mathbb E$ is the expectation with respect to all the Gaussian random variables.
 \begin{remark}\label{finitevolume}
While throughout this paper we keep the form factors $\alpha_r$'s constant as $N\to\infty$, all the results hold also under the weaker hypothesis that $N_r/N\to\alpha_r\in (0,1)$ (see also Remark \ref{remark_alpha->0}, Sect. 4.3 for vanishing $\alpha_r$'s).
Indeed any vanishing correction to $\alpha_r$ doesn't change the thermodynamic limit of the quenched pressure density \eqref{Q_press}. This can be seen proving by interpolation method that at given $N$ the quenched pressure is a Lipschitz function of $\Delta$ w.r.t. the entrywise matrix norm $\sum_{r,s\leq K}|\Delta_{r,s}|$.
\end{remark}

The (random) Boltzmann-Gibbs average will be denoted by
\begin{align}\label{gibbs_average}
    \langle\cdot\rangle_N:=\frac{\sum_{\sigma\in\Sigma_N}e^{-H_N(\sigma)}(\cdot)}{Z_N},\,\quad Z_N:=\sum_{\sigma\in\Sigma_N}e^{-H_N(\sigma)}\,.
\end{align}
To help the presentation we will occasionally make explicit the dependence of the Boltzmann-Gibbs measure on further parameters by using sub and superscripts, for instance $\langle\cdot\rangle_{N,t}^{(\epsilon)}$.
In the previous definitions \eqref{press}-\eqref{gibbs_average} we have chosen to reabsorb the inverse absolute temperature $\beta$ in the parameters $\mu_{rs}$ and $h_r$.
The first result of this paper is the computation of the random pressure \eqref{press} in the thermodynamic limit. 

\begin{theorem}[Solution of the model]
\label{main_theorem} The random pressure \eqref{press} of a $K$-layer deep Boltzmann machine on the Nishimori line converges almost surely in the thermodynamic limit and its value is given by a $K$-dimensional variational principle:
\begin{align}\label{var_principle_DBM}
    \lim_{N\to\infty}p_N\aseq\lim_{N\to\infty}\bar{p}_N(\mu,\mathbf{h})=\sup_{\mathbf{x}_o}\newinf_{\mathbf{x}_e}p_{var}(\mathbf{x};\mu,\mathbf{h}) \;,
\end{align}
where $\mathbf{x}_o$ and $\mathbf{x}_e$ denote the vectors of the odd and even components of the order parameter $\mathbf{x}\in[0,1)^K$ respectively,
\begin{multline}\label{p_var_DBM}
    p_{var}(\mathbf{x};\mu,\mathbf{h}):=\sum_{r=1}^K\,\alpha_r\,\psi\left((M\mathbf{x})_r+h_r
    \right)\,+\,\sum_{r=1}^K\,\frac{\Delta_{r,r+1}}{2}\,[(1-x_{r})(1-x_{r+1})-2x_rx_{r+1}]
\end{multline}and for any $x\geq 0$
\begin{align}\label{psi_def}
    \psi(x):=\mathbb{E}_z\log2\cosh\left(
    z\sqrt{x}+x\right)\;,\quad z\sim\mathcal N(0,1) \;.
\end{align}
Moreover, defining $\bar{\mathbf x}$ as the solution of the variational problem \eqref{var_principle_DBM}, we have
\be\label{exchange_lim}
\lim_{N\to\infty}\E \langle q_r \rangle_N \,=\, \lim_{N\to\infty}\E \langle m_r \rangle_N \;=\, \bar{x}_r 
\ee
for every $r=1,\dots,K\,$ and for all the points of the phase space $(\mu,\hat{\alpha},\mathbf{h})$ where $\bar{x}$ is $\mathbf{h}$-differentiable and the matrix $\Delta$ is invertible.
\end{theorem}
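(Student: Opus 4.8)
\emph{Sketch of proof.} I would first dispose of the almost-sure statement and the coincidence $\lim_N p_N\aseq\lim_N\bar p_N$, which is the soft part. Indeed $p_N$ is a Lipschitz function of the family of independent standard Gaussians appearing in \eqref{H_MSK_NL-bis} with Lipschitz constant of order $N^{-1/2}$, so Gaussian concentration gives $\mathbb{P}(|p_N-\bar p_N|>t)\le e^{-cNt^2}$ for every $t>0$, and Borel--Cantelli yields $p_N-\bar p_N\to 0$ almost surely. It then suffices to compute $\lim_N\bar p_N(\mu,\mathbf h)$ and to show it equals the right-hand side of \eqref{var_principle_DBM}.

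The core of the plan is the adaptive interpolation method recalled in Section~\ref{preliminary_sect}. One introduces an interpolating model with pressure $\bar p_N(t)$, $t\in[0,1]$, depending on a path $t\mapsto\mathbf x(t)\in[0,1)^K$ to be chosen, which at $t=1$ reduces to the deep Boltzmann machine and at $t=0$ decouples into $K$ independent single-layer models, each spin of layer $r$ seeing an independent effective Gaussian field of Nishimori type; the $t=0$ pressure then produces the term $\sum_r\alpha_r\psi((M\mathbf x(0))_r+h_r)$, while the deterministic terms of \eqref{H_MSK_NL-bis} together with the linear counterterm of the interpolation produce the quadratic contribution in \eqref{p_var_DBM}. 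By the fundamental theorem of calculus $\bar p_N=\bar p_N(0)+\int_0^1\partial_t\bar p_N(t)\,\d t$, and Gaussian integration by parts followed by the Nishimori identities of Section~\ref{preliminary_sect} (which turn the magnetization expectations coming from the deterministic part of $H_N$ into overlap expectations) give $\partial_t\bar p_N(t)=\mathcal S_N(t)+\mathcal R_N(t)$, with a ``signal'' term $\mathcal S_N(t)$ such that $\bar p_N(0)+\int_0^1\mathcal S_N(t)\,\d t$ reconstructs $p_{var}(\cdot;\mu,\mathbf h)$ along the path, and a remainder $\mathcal R_N(t)$ proportional to $\sum_{r,s}\Delta_{r,s}\,\mathbb{E}\langle(q_r-x_r(t))(q_s-x_s(t))\rangle_{N,t}$. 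Since $\Delta$ is tridiagonal it is not sign-definite, so $\mathcal R_N$ cannot be controlled by a Guerra-type bound; instead one uses that on the Nishimori line the overlaps $q_r$ concentrate around their means — a fact holding in full generality after adding the usual small perturbation $\epsilon$, and requiring no positivity — so that choosing the path adaptively as $x_r(t)=\mathbb{E}\langle q_r\rangle_{N,t}^{(\epsilon)}$ makes $\mathcal R_N(t)$ a vanishing combination of overlap variances, uniformly in $t$ and for almost every $\epsilon$. Passing to the limit along this adaptive path, then removing the perturbation, and using that the limiting path is a stationary point of $p_{var}$ (its stationarity equations being exactly the self-consistent equations for the limiting overlaps), one gets $\lim_N\bar p_N=p_{var}(\bar{\mathbf x};\mu,\mathbf h)$ at a critical point $\bar{\mathbf x}$.

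It then remains to match this value with $\sup_{\mathbf x_o}\newinf_{\mathbf x_e}p_{var}$. The tridiagonal structure \eqref{structure_interactions_DBM} makes the layer-interaction graph bipartite, the two classes being the odd- and the even-indexed layers, so the $r$-th summand of the $\psi$-term of $p_{var}$ depends on $\mathbf x$ only through components of the opposite parity; hence $p_{var}$ is the sum of a function of $\mathbf x_o$ alone, a function of $\mathbf x_e$ alone — each obtained by composing the convex function $\psi$ with affine maps — and a term bilinear-plus-affine in $(\mathbf x_o,\mathbf x_e)$, which is the structural reason for the odd/even saddle. The hard part will be upgrading the equality at the critical point $\bar{\mathbf x}$ into the $\sup$--$\newinf$ formula: one has to reconcile it with the lower and upper bounds produced by the interpolation run along suitable constant paths, and show that $p_{var}$ genuinely attains the saddle value rather than escaping to the boundary of $[0,1)^K$; I would do this by exploiting the convexity of $\psi$ together with the correlation inequalities of Section~\ref{preliminary_sect}.

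Finally, I would derive \eqref{exchange_lim} by a convexity argument. A Gaussian integration by parts gives $\partial_{h_r}\bar p_N=\tfrac{\alpha_r}{2}(1-\mathbb{E}\langle q_r\rangle_N)+\alpha_r\,\mathbb{E}\langle m_r\rangle_N$, and the Nishimori identity $\mathbb{E}\langle q_r\rangle_N=\mathbb{E}\langle m_r\rangle_N$ reduces it to $\tfrac{\alpha_r}{2}(1+\mathbb{E}\langle m_r\rangle_N)$. On the Nishimori line $\bar p_N$ is convex in $h_r$, hence so is the limit $\bar p=\sup_{\mathbf x_o}\newinf_{\mathbf x_e}p_{var}$, and therefore $\partial_{h_r}\bar p_N\to\partial_{h_r}\bar p$ at every point where $\bar p$ is $h_r$-differentiable, i.e.\ where $\bar{\mathbf x}$ is $\mathbf h$-differentiable. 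A standard envelope argument at the stationary point $\bar{\mathbf x}$ gives $\partial_{h_r}\bar p=\alpha_r\,\psi'((M\bar{\mathbf x})_r+h_r)$, while the stationarity equations of $p_{var}$ — and here one uses that $\Delta$, equivalently $M=\hat\alpha^{-1}\Delta$, is invertible, so that the $M$-weighted fixed-point relation can be inverted componentwise — force $\psi'((M\bar{\mathbf x})_r+h_r)=\tfrac12(1+\bar x_r)$. Comparing the two expressions gives $\tfrac{\alpha_r}{2}(1+\mathbb{E}\langle m_r\rangle_N)\to\tfrac{\alpha_r}{2}(1+\bar x_r)$, hence $\mathbb{E}\langle m_r\rangle_N\to\bar x_r$, and by the Nishimori identity once more $\mathbb{E}\langle q_r\rangle_N\to\bar x_r$, which is \eqref{exchange_lim}.
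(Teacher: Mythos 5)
Your first and last paragraphs are in line with the paper: the almost-sure part via Gaussian concentration and Borel--Cantelli is exactly the paper's Proposition~\ref{tala_concentration}, and your derivation of \eqref{exchange_lim} from \eqref{corr_in_1}, convexity of $\bar p_N$ in $h_r$, Nishimori, and an envelope argument at the stationary point is essentially the paper's. The gap is in the middle, and it is not a detail — it is the technical core.

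You propose running the interpolation along the fully adaptive path $\mathbf q_\epsilon(t)=\mathbb{E}\langle\mathbf m\rangle_t^{(\epsilon)}$, killing the remainder, and then claiming the signal reconstructs $p_{var}$ at some critical point. Two things go wrong. First, in the sum rule \eqref{sum_rule_DBM} the $\psi$-terms see only $\int_0^1 q_{\epsilon,r}(t)\,\d t$ while the quadratic terms see $\int_0^1 q_{\epsilon,r}(t)q_{\epsilon,r+1}(t)\,\d t$, and for a genuinely time-dependent path these do not combine into $p_{var}$ evaluated at a single point: the integral of a product is not the product of the integrals. You would have to prove the adaptive path is asymptotically constant in $t$, which is a nontrivial additional claim. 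Second, even granting a formula of the form $\lim\bar p_N=p_{var}(\bar{\mathbf x})$ at some critical point, there is no a priori reason that this critical point is the one realizing $\sup_{\mathbf x_o}\newinf_{\mathbf x_e}$ rather than some other stationary point of the saddle functional; you defer this to a ``reconciliation'' with lower and upper bounds along ``suitable constant paths'', but those bounds are precisely the nonobvious content. A path constant in all components leaves a remainder $\mathbb{E}\langle(\mathbf m-\mathbf x,\Delta(\mathbf m-\mathbf x))\rangle$ that is sign-indefinite because $\Delta$ is tridiagonal, hence cannot be discarded as in Guerra's scheme — this is stated explicitly after the theorem as the reason the convex multi-species argument of \cite{us} fails here.

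What the paper actually does is run two hybrid paths, each constant in one parity and adaptive in the other. For the lower bound, $\dot{\mathbf Q}_{\epsilon,e}(t)=M^{(eo)}\mathbf x_o$ is constant while $\dot{\mathbf Q}_{\epsilon,o}(t)=M^{(oe)}\mathbb{E}\langle\mathbf m_e\rangle_t$ adapts; because the odd $q$-component equals $\mathbf x_o$ identically, the mixed integral $\int_0^1 q_{\epsilon,o}q_{\epsilon,e}\,\d t$ factorizes as $\mathbf x_o\cdot\int_0^1 q_{\epsilon,e}\,\d t$, the remainder reduces to a cross term between $(\mathbf m_o-\mathbf x_o)$ and the concentrating $(\mathbf m_e-\mathbb{E}\langle\mathbf m_e\rangle_t)$ that vanishes by Cauchy--Schwarz and Lemma~\ref{concentration_lemma}, and the signal is exactly $p_{var}(\mathbf x_o,\int_0^1\mathbb{E}\langle\mathbf m_e\rangle_t\,\d t)\geq\newinf_{\mathbf x_e}p_{var}(\mathbf x_o,\cdot)$; optimizing over $\mathbf x_o$ gives the lower bound. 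For the upper bound the even velocity is replaced by $M^{(eo)}F(M^{(oe)}\mathbb{E}\langle\mathbf m_e\rangle_t+\mathbf h_o)$, chosen precisely so that for every $t$ the point $\mathbb{E}\langle\mathbf m_e\rangle_t$ is the $\mathbf x_e$-minimizer of $p_{var}(\mathbf F_{\epsilon,o}(t),\cdot)$ (here the convexity in $\mathbf x_e$, Remark~\ref{convexity_properties_pvar}, is essential), while Jensen's inequality on the convex $\psi$ handles the fact that the path is no longer constant. Both paths must also be shown regular in the sense of Definition~\ref{def_regularity} via Liouville's formula and the correlation inequality \eqref{corr_in_2}, a step you do not mention but which is what licenses the concentration Lemma~\ref{concentration_lemma}. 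Your bipartite/odd-even observation correctly explains why the saddle has that shape, but without the two explicit hybrid paths and the convexity mechanism matching each to one side of the $\sup\text{--}\newinf$, the argument does not close.
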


The proof of Theorem \ref{main_theorem} relies on the adaptive interpolation method \cite{adaptive} combined with a concentration result and with the Nishimori identities, that will be presented in the next section.
The main difference with the model solved in \cite{us} is that the matrix $\Delta$ is not definite, indeed its eigenvalues have alternating signs. This entails that the remainder identified by interpolation has not a definite sign and cannot be discarded a priori at the expense of an inequality. Moreover, the concentration of the overlap strongly depends on a notion of regularity of the path followed by the adaptive interpolation. Hence one has to carefully choose a path that is regular and allows also to exploit the convexities of the two sums involved in the functional \eqref{p_var_DBM}. 

Secondly, we focus on the properties of the consistency equation obtained from the optimization problem \eqref{var_principle_DBM} when the matrix $\Delta$ is invertible, that is when $K$ is even. The stability of the optimizers of \eqref{var_principle_DBM} is a more delicate problem with respect to the convex multi-species case \cite{us}, due to the min-max nature of the variational principle. In the following, given a square matrix $A$ we denote by $\rho(A)$ its spectral radius and by
$A^{(eo)}$ the submatrix of $A$ obtained by keeping only even rows and odd columns of $A$. An analogous definition is given for $A^{(oe)}, A^{(oo)}, A^{(ee)}$. Notice that, when $K$ is even, $\Delta^{(eo)}$ is an upper triangular $K/2\times K/2$ square matrix with non-zero diagonal elements and therefore it is invertible. Similar considerations hold for the sub-matrix $\Delta^{(oe)}=[\Delta^{(eo)}]^T$. We prove the following 

\begin{theorem}\label{phase_transition_thm}
Let $K$ be even and $\mathbf{h}=0$. 
If $\rho([M^{2}]^{(oo)})<1$ then $\mathbf{x}=0$ is the unique solution to the variational problem \eqref{var_principle_DBM}. Conversely, if $\rho([M^{2}]^{(oo)})>1$ then the solution of \eqref{var_principle_DBM} is a vector $\mathbf{x}=\bar{\mathbf{x}}(M)$ with strictly positive components satisfying the consistency equation:
\begin{align}
    x_r \,=\, \E_z\tanh\left(z\,\sqrt{(M\mathbf{x})_r\,}\,+\,(M\mathbf{x})_r\right) \quad\forall\,r=1,\dots,K
\end{align}
where $z$ denotes a standard Gaussian random variable.
\end{theorem}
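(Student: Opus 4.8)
The plan is to turn the min-max \eqref{var_principle_DBM} into a fixed-point (consistency) equation and then to read off the transition from the spectrum of $[M^2]^{(oo)}$. Write $f(\mathbf x):=p_{var}(\mathbf x;\mu,\mathbf 0)$; since $K$ is even, $\Delta=\hat\alpha M$ is invertible, hence so are $M$ and the triangular blocks $M^{(oe)}=[\hat\alpha_o]^{-1}\Delta^{(oe)}$, $M^{(eo)}=[\hat\alpha_e]^{-1}\Delta^{(eo)}$, where $\hat\alpha_o,\hat\alpha_e$ are the diagonal matrices of odd and even form factors. First I would compute $\partial_{x_r}f(\mathbf x)=\tfrac{\alpha_r}{2}\big(M(\phi(M\mathbf x)-\mathbf x)\big)_r$, where $\phi(M\mathbf x)$ denotes the vector with components $\phi((M\mathbf x)_s)$ and $\phi(a):=\E_z\tanh(z\sqrt a+a)$: this requires replacing $\psi'(a)$ by $\tfrac12(1+\phi(a))$, which follows from a Gaussian integration by parts, $\psi'(a)=\tfrac12\E_z[1-\tanh^2(z\sqrt a+a)]+\E_z\tanh(z\sqrt a+a)$, combined with the Nishimori identity $\E_z\tanh(z\sqrt a+a)=\E_z\tanh^2(z\sqrt a+a)$ of Section~\ref{preliminary_sect}. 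Since $M$ is invertible, $\nabla f(\mathbf x)=\mathbf 0$ is then equivalent to the consistency equation $x_r=\phi((M\mathbf x)_r)$ for every $r$. I would also record the standard properties of the scalar function $\phi:[0,\infty)\to[0,1)$, namely that it is an increasing bijection, concave, with $\phi(0)=0$ and $\phi'(0)=1$; in particular $\phi(a)\le a$ with equality only at $a=0$, and $\phi$ is subhomogeneous.

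Next I would show that an optimiser of \eqref{var_principle_DBM} is a stationary point of $f$ lying in $[0,1)^K$ with non-negative entries, all zero or all strictly positive. For fixed $\mathbf x_o$ the map $\mathbf x_e\mapsto f(\mathbf x_o,\mathbf x_e)$ is convex — each $\psi((M\mathbf x)_r)$ with $r$ odd is convex in $\mathbf x_e$, and the bracket-sum in \eqref{p_var_DBM} is affine in $\mathbf x_e$ (it couples $\mathbf x_o$ and $\mathbf x_e$ only bilinearly) — and its stationarity restricted to even $r$ reads $M^{(eo)}(\phi(M\mathbf x)-\mathbf x)_{\mathrm{odd}}=\mathbf 0$, i.e.\ $x_s=\phi((M\mathbf x)_s)$ for the odd $s$, solved by the explicit inner minimiser $\mathbf x_e^\ast(\mathbf x_o)=[M^{(oe)}]^{-1}\phi^{-1}(\mathbf x_o)$ (componentwise $\phi^{-1}$). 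Applying Danskin's theorem to the outer variable, an interior optimiser $\bar{\mathbf x}$ then also satisfies $\partial_{x_r}f(\bar{\mathbf x})=0$ for odd $r$, that is (again by invertibility of $M^{(oe)}$) the full consistency equation; the boundary cases are handled through the variational inequalities, the fact that $M$ is entrywise non-negative, and $\phi<1$, while the connectedness of the path-like interaction graph and $\phi(a)>0$ for $a>0$ force a non-zero optimiser to have all components strictly positive, and $\bar x_r=\phi((M\bar{\mathbf x})_r)<1$ keeps $\bar{\mathbf x}\in[0,1)^K$. I expect this step to be the hard part: unlike the convex multi-species model of \cite{us}, here the reduced functional $F:=\inf_{\mathbf x_e}f$ is a sum of a convex and a concave function of $\mathbf x_o$, so concavity cannot be invoked directly, and reducing the saddle point to a genuine critical point — together with controlling the behaviour near the faces $x_r\to1$ — is delicate precisely because of the min-max nature of the principle.

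For the subcritical case $\rho([M^2]^{(oo)})<1$ I would argue as follows. Since $M$ has the off-diagonal block form with blocks $M^{(oe)},M^{(eo)}$, the squares of its eigenvalues are the eigenvalues of $[M^2]^{(oo)}=M^{(oe)}M^{(eo)}$, hence $\rho(M)=\rho([M^2]^{(oo)})^{1/2}<1$ and $M^n\to0$. If $\bar{\mathbf x}\ge\mathbf 0$ solves the consistency equation, then $\bar x_r=\phi((M\bar{\mathbf x})_r)\le(M\bar{\mathbf x})_r$ because $M\bar{\mathbf x}\ge\mathbf 0$; thus $\bar{\mathbf x}\le M\bar{\mathbf x}$ entrywise, and iterating the order-preserving map $M$ gives $\bar{\mathbf x}\le M^n\bar{\mathbf x}\to\mathbf 0$, so $\bar{\mathbf x}=\mathbf 0$. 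Since an optimiser exists and must solve the consistency equation with non-negative components (previous step), the optimiser is $\mathbf 0$, and it is unique.

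For the supercritical case $\rho([M^2]^{(oo)})>1$ I would first expand $f$ at $\mathbf 0$: from $\psi(a)=\log2+\tfrac a2+\tfrac{a^2}{4}+o(a^2)$ the linear terms cancel and $f(\mathbf x)-f(\mathbf 0)=\tfrac14(\mathbf x,\Delta(M-I)\mathbf x)+o(|\mathbf x|^2)$. Writing this quadratic form in the odd/even blocks, the $\mathbf x_e$-block equals $\Delta^{(eo)}M^{(oe)}=\hat\alpha_e[M^2]^{(ee)}$, which is positive definite, and minimising over $\mathbf x_e\ge\mathbf 0$ (a restriction that only raises the value) and simplifying via $\Delta^{(oe)}=\hat\alpha_o M^{(oe)}$ yields $\inf_{\mathbf x_e}(\mathbf x,\Delta(M-I)\mathbf x)\ge(\mathbf x_o,\hat\alpha_o([M^2]^{(oo)}-I)\mathbf x_o)$. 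Taking $\mathbf x_o=\varepsilon\mathbf v$ with $\mathbf v\ge\mathbf 0$, $\mathbf v\neq\mathbf 0$, a Perron eigenvector of the entrywise non-negative matrix $[M^2]^{(oo)}$, the right-hand side becomes $\varepsilon^2\,(\rho([M^2]^{(oo)})-1)\,(\mathbf v,\hat\alpha_o\mathbf v)>0$, so after the standard localisation of the inner minimiser at distance $O(\varepsilon)$ from the origin one gets $\inf_{\mathbf x_e}f(\varepsilon\mathbf v,\mathbf x_e)>f(\mathbf 0)$ for small $\varepsilon$; hence $\mathbf 0$ does not realise $\sup_{\mathbf x_o}\inf_{\mathbf x_e}f$, and by the second step the optimiser is a strictly positive solution of the consistency equation. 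Finally, uniqueness of such a solution follows from the fact that $\mathbf x\mapsto\phi(M\mathbf x)$ is order-preserving and subhomogeneous on the positive cone, hence has at most one strictly positive fixed point, which is the announced $\bar{\mathbf x}(M)$.
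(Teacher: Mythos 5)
Your proposal is correct in its essentials and, for the subcritical case, takes a genuinely different and more elementary route than the paper. The paper's own argument computes the Hessian of the reduced functional $\pi(\mathbf{x}_o)=\inf_{\mathbf{x}_e}p_{var}$ in the form
\[
\mathscr{H}_{\mathbf{x}_o}\pi=\tfrac12[\hat\alpha^{(oo)}]^{1/2}[\mathcal{D}^{(oo)}]^{-1/2}\bigl[-\mathbbm{1}+S^{(oo)}\bigr][\hat\alpha^{(oo)}]^{1/2}[\mathcal{D}^{(oo)}]^{-1/2},
\]
where $\mathcal D$ is the diagonal matrix of the derivatives $F'((M\mathbf x)_r+h_r)$, and then bounds $\rho(S^{(oo)})$ by $\rho([M^2]^{(oo)})$ through similarity and sub-multiplicativity, so that $\rho([M^2]^{(oo)})<1$ makes $\pi$ globally concave; this handles $\mathbf h\neq 0$ as well and feeds directly into the proof of \eqref{exchange_lim}. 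Your replacement of all this by the observation that any fixed point $\bar{\mathbf x}\ge 0$ satisfies $\bar{\mathbf x}\le M\bar{\mathbf x}$ (because $\phi$ is concave with $\phi(0)=0$, $\phi'(0)=1$, hence $\phi(a)\le a$) and then iterating the nonnegative matrix $M$ with $\rho(M)=\rho([M^2]^{(oo)})^{1/2}<1$ is a clean, Hessian-free argument that suffices for the statement at $\mathbf h=0$; the required block identities $M=\hat\alpha^{-1}\Delta$, $[M^2]^{(oo)}=M^{(oe)}M^{(eo)}$ and the spectral identification $\rho(M)^2=\rho([M^2]^{(oo)})$ all check out. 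Your supercritical argument is essentially the paper's: the linear terms in the expansion cancel, the quadratic term is $\tfrac14(\mathbf x,\Delta(M-I)\mathbf x)$, the unconstrained inner minimisation collapses this to $\tfrac14(\mathbf x_o,\hat\alpha^{(oo)}([M^2]^{(oo)}-I)\mathbf x_o)$, and the Perron eigenvector of the irreducible nonnegative matrix $[M^2]^{(oo)}$ gives an admissible direction of strict increase of the reduced functional at $\mathbf x_o=0$. The added uniqueness of the strictly positive fixed point via monotonicity and strict subhomogeneity is a small bonus the paper does not spell out. The one place you (rightly) flag as delicate and leave at the level of a sketch is exactly the paper's Proposition~\ref{proposition_sup_pi}: showing the $\sup$ of $\pi$ cannot escape to $x_{2l-1}\to1^-$ (using $\lim_{x\to1^-}F^{-1}(x)=+\infty$), and the componentwise chain argument along the tridiagonal graph showing that a maximiser is either $\mathbf 0$ or has all components strictly positive; for a complete proof this step must be carried out, since both of your spectral arguments rest on the optimiser being an interior critical point satisfying the consistency equation.
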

The proof of Theorem \ref{phase_transition_thm} amounts to the computation of the Hessian matrix of an auxiliary function introduced later and in a check of its eigenvalues. The peculiar form of the consistency equations due to the structure \eqref{structure_interactions_DBM} plays a central role. Theorem \ref{phase_transition_thm} implies the existence of a phase transition in our model localized at zero magnetic field and unitary spectral radius as discussed in Remark \ref{rmk_phase_trans} below.
The following Proposition further clarifies the structure of the phase transition and how the system's geometry, encoded in the form factors $\alpha_r$'s, can influence it.

\begin{proposition} \label{prop:rhobound}
For any given interaction matrix $\mu\,$, 
we have
\be  \label{rhobound}
\sup_{\alpha_1,\dots,\alpha_K}
\rho\left([M^2]^{(oo)}\right) \,=\, \frac{1}{4}\,\max_{r} \mu_{r,r+1}^2 
\ee
where the $\sup$ on the l.h.s. is taken over the form factors $\alpha_1,\dots,\alpha_K\geq0\,$, $\sum_{r=1}^K\alpha_r=1\,$ and the $\max$ on the r.h.s. is taken over $r=1,\dots,K-1\,$.
Furthermore the $\sup$ on the l.h.s. of \eqref{rhobound} is attained if and only if one of the following conditions is verified:
\begin{itemize}
\item[a)] there exists $r^*\in\{1,\dots,K-1\}\,$ such that
\be \label{bestlambda1}
\alpha_{r^*} \,=\, \alpha_{r^*+1} \,=\, \frac{1}{2}
\quad,\quad \mu_{r^*,\,r^*+1}=\max_{r}\mu_{r,r+1} \;;
\ee
\item[b)]
there exists $r^*\in\{2,\dots,K-1\}\,$ such that
\be \label{bestlambda2}
\alpha_{r^*} \,=\, \alpha_{r^*-1}+\alpha_{r^*+1} \,=\, \frac{1}{2}
\quad,\quad \mu_{r^*-1,\,r^*}=\mu_{r^*,\,r^*+1}=\max_{r}\mu_{r,r+1} \;.
\ee
\end{itemize}
\end{proposition}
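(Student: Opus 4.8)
The plan is to reduce $\rho([M^2]^{(oo)})$ to a one--dimensional optimization along the chain. Since $M$ is tridiagonal with zero diagonal, a product $(M^2)_{rs}=\sum_k M_{rk}M_{ks}$ with $r$ odd and $s$ even would require $k$ of both parities, so the mixed block $[M^2]^{(oe)}$ vanishes; listing odd indices first, $M^2$ is then block--diagonal with diagonal blocks $[M^2]^{(oo)}=M^{(oe)}M^{(eo)}$ and $[M^2]^{(ee)}=M^{(eo)}M^{(oe)}$, which have the same nonzero spectrum. Hence $\rho([M^2]^{(oo)})=\rho(M^2)=\rho(M)^2$, and writing $M=\mu\hat\alpha=(\mu\hat\alpha^{1/2})\hat\alpha^{1/2}$ and using $\rho(AB)=\rho(BA)$ once more, $\rho(M)=\rho(\tilde M)$ with $\tilde M:=\hat\alpha^{1/2}\mu\hat\alpha^{1/2}$ symmetric tridiagonal and $\tilde M_{r,r+1}=\mu_{r,r+1}\sqrt{\alpha_r\alpha_{r+1}}$. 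Conjugating $\tilde M$ by $\mathrm{diag}(1,-1,1,-1,\dots)$ turns it into $-\tilde M$, so its spectrum is symmetric about $0$ and $\rho(\tilde M)=\lambda_{\max}(\tilde M)=\max_{\|v\|=1}v^T\tilde M v$. Thus $\rho([M^2]^{(oo)})=\big(\max_{\|v\|=1}v^T\tilde M v\big)^2$.

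Next I would decouple the maximization over $v$ from the one over the form factors. Since $\mu$ and $\alpha$ are nonnegative, replacing $v$ by its entrywise modulus does not decrease
\[ v^T\tilde M v=2\sum_{r=1}^{K-1}\mu_{r,r+1}\big(\sqrt{\alpha_r}\,v_r\big)\big(\sqrt{\alpha_{r+1}}\,v_{r+1}\big), \]
so one may take $v\ge0$; setting $u_r:=\sqrt{\alpha_r}\,v_r\ge0$, Cauchy--Schwarz gives $\sum_r u_r\le\big(\sum_r\alpha_r\big)^{1/2}\big(\sum_r v_r^2\big)^{1/2}=1$, while every $u\ge0$ with $\sum_r u_r=1$ comes from $\alpha=u$, $v=(\sqrt{u_r})_r$, for which that Cauchy--Schwarz step is an equality. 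Writing $g(u):=\sum_{r=1}^{K-1}\mu_{r,r+1}u_r u_{r+1}$, which is nonnegative and homogeneous of degree two, it follows that
\[ \sup_{\alpha}\rho([M^2]^{(oo)})=\Big(2\max_{u\ge0,\ \sum_r u_r=1}g(u)\Big)^2, \]
and moreover a form--factor vector $\alpha^*$ attains the supremum if and only if $\alpha^*$ (being already a simplex point) maximizes $g$ over the probability simplex: if $\alpha^*$ attains, then $u^*=(\sqrt{\alpha^*_r}\,v^*_r)_r$ maximizes $g$, so $\sum_r u^*_r=1$, the Cauchy--Schwarz inequality becomes an equality, $v^*=(\sqrt{\alpha^*_r})_r$, and $u^*=\alpha^*$; the converse is immediate by taking $v=(\sqrt{\alpha^*_r})_r$.

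It remains to maximize $g$ over the simplex. One has
\[ g(u)\le\Big(\max_r\mu_{r,r+1}\Big)\sum_{r=1}^{K-1}u_r u_{r+1}\le\Big(\max_r\mu_{r,r+1}\Big)\Big(\sum_{r\text{ odd}}u_r\Big)\Big(\sum_{r\text{ even}}u_r\Big)\le\tfrac14\max_r\mu_{r,r+1}, \]
the middle inequality because each term $u_r u_{r+1}$ couples an odd with an even index, the last by AM--GM on $\sum_{r\text{ odd}}u_r+\sum_{r\text{ even}}u_r=1$; since $g$ equals $\tfrac14\max_r\mu_{r,r+1}$ when $u$ puts mass $1/2$ on each endpoint of an edge realizing $\max_r\mu_{r,r+1}$, this yields \eqref{rhobound} (the supremum being $\big(2\cdot\tfrac14\max_r\mu_{r,r+1}\big)^2=\tfrac14\max_r\mu_{r,r+1}^2$). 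For the equality cases, a maximizer $u$ of $g$ must satisfy: (i) $\mu_{r,r+1}=\max_s\mu_{s,s+1}$ whenever $u_r u_{r+1}>0$; (ii) $u_r u_s=0$ for $r$ odd, $s$ even with $|r-s|\ge3$; and (iii) $\sum_{r\text{ odd}}u_r=\sum_{r\text{ even}}u_r=\tfrac12$. Condition (ii) forces the odd--indexed support and the even--indexed support of $u$ to lie mutually at distance exactly $1$, and an elementary argument shows one of the two must then be a single index $r^*$: otherwise the smallest element of one support forces the whole other support into its two neighbours, after which a second element of the first support creates a pair at distance $3$. Hence the two supports consist of $\{r^*\}$ together with either a single neighbour of $r^*$ — and (i)--(iii) then give exactly \eqref{bestlambda1} (case a)) — or both neighbours of $r^*$ — giving \eqref{bestlambda2} (case b)); conversely each configuration in \eqref{bestlambda1}--\eqref{bestlambda2} makes $u=\alpha$ a maximizer of $g$. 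Translating back through the equivalence of the previous step then identifies the attaining form--factor vectors and completes the proof.

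I expect the last step to be the main obstacle: controlling the equality cases of $g$, and in particular the combinatorial claim that the odd and even supports of an optimal $u$ cannot both contain two or more indices. A secondary, purely bookkeeping, point is what happens near the ends of the chain, where a neighbour $r^*\pm1$ may fall outside $\{1,\dots,K\}$ and case b) degenerates to case a); this is harmless but must be handled explicitly. Everything upstream — the reduction to $\rho(\tilde M)^2$ and the substitution $u_r=\sqrt{\alpha_r}\,v_r$ — is routine linear algebra and Cauchy--Schwarz.
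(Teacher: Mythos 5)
Your proof is correct, but it takes a genuinely different route from the paper's. The paper bounds $\rho([M^2]^{(oo)})$ by the row--sum norm $\|[M^2]^{(oo)}\|_\infty$, computes each row sum explicitly, and delegates both the estimate $\leq\tfrac14\hat\mu^2$ and the classification of equality cases to an external algebraic lemma (Lemma~\ref{lem:diseq}, imported from \cite{noi_deep2}). You instead exploit the spectral structure directly: the tridiagonal, zero-diagonal form of $M$ makes $M^2$ block--diagonal in the parity--ordered basis, so $\rho([M^2]^{(oo)})=\rho(M^2)=\rho(M)^2=\rho(\tilde M)^2$ with $\tilde M=\hat\alpha^{1/2}\mu\,\hat\alpha^{1/2}$ symmetric, and the sign--conjugation by $\mathrm{diag}(1,-1,1,\dots)$ shows $\rho(\tilde M)=\lambda_{\max}(\tilde M)$, a Rayleigh quotient. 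The substitution $u_r=\sqrt{\alpha_r}\,v_r$ together with Cauchy--Schwarz then decouples the $\alpha$-optimization from the $v$-optimization, reducing everything to maximizing $g(u)=\sum_r\mu_{r,r+1}u_ru_{r+1}$ over the probability simplex, which you handle by the odd/even split and AM--GM. What your route buys is self-containment (no appeal to Lemma~\ref{lem:diseq}), a transparent interpretation of the constant $\tfrac14\hat\mu^2$ as the squared top eigenvalue of a normalized symmetric tridiagonal matrix, and a clean equivalence between maximizers in $u$ and attaining form factors $\alpha$. The equality analysis in both proofs is essentially the same combinatorics on the simplex; your conditions (i)--(iii) and the support argument (smallest odd support element forces the even support into its two neighbours, and a second odd support element then kills one of them) re-derive the equality cases of Lemma~\ref{lem:diseq}. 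Two small points to make fully explicit when writing it up: that $\max_{\sum u_r\le1}g=\max_{\sum u_r=1}g$ because $g\ge0$ is homogeneous of degree two, so maximizers scale to the boundary; and the chain-endpoint degeneracy of case b) into case a), which you already flag and which is indeed harmless.
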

\begin{remark}\label{rmk_phase_trans}
For even $K$, Proposition \ref{prop:rhobound} together with Theorems \ref{main_theorem} and \ref{phase_transition_thm} show that if the interaction strengths $\mu_{r,r+1}<2$ for all $r=1,\dots,K-1$, then the magnetisations and the overlaps vanish as $N\to\infty$ for every choice of the form factors $(\alpha_1,\dots,\alpha_K)\in(0,1)^K$.
By Theorem \ref{phase_transition_thm} $\bar{\mathbf{x}}$ is not identically zero on the space of parameters $(\mu,\,\hat{\alpha})\,$, hence 
the limiting quenched pressure \eqref{var_principle_DBM} cannot be an analytic function.

Proposition \ref{prop:rhobound} also shows that 
as soon as $\mu_{r,r+1}>2$ for some $r=1,\dots,K-1$, then, by suitably localizing only two 
extensive layers near the maximal interaction (condition \eqref{bestlambda1}
), their magnetisations and overlaps turn out to be positive in the limit $N\to\infty$.

\end{remark}

Finally, we prove a uniqueness result that holds for arbitrary spectral radius.
\begin{theorem}\label{teor_uniq}
Let $h_r>0\ \forall\,r=1,\dots,K\,$. The consistency equation
\begin{equation} \label{eq:CE}
x_r \,=\, \E_z\tanh\left(z\,\sqrt{(M\mathbf{x})_r+h_r\,}\,+\,(M\mathbf{x})_r+h_r\right) \quad\forall\,r=1,\dots,K
\end{equation}
admits a unique solution $\mathbf{x}=\bar{\mathbf{x}}(M,\mathbf{h})\in(0,1)^K$.
\end{theorem}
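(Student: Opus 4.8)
The plan is to read the consistency equation \eqref{eq:CE} as a fixed-point equation for a map that is order-preserving and strictly sub-homogeneous on the cube $[0,1]^K$, and then to run the classical uniqueness argument for such maps (equivalently, a Birkhoff–Hilbert projective-metric contraction). Writing, for $\gamma\ge0$,
\begin{align}
\phi(\gamma):=\E_z\tanh\!\big(z\sqrt{\gamma}+\gamma\big)\,,\qquad z\sim\mathcal N(0,1)\,,
\end{align}
the system \eqref{eq:CE} becomes $\mathbf{x}=F(\mathbf{x})$ with $F_r(\mathbf{x}):=\phi\big((M\mathbf{x})_r+h_r\big)$, $r=1,\dots,K$. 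Everything rests on these properties of $\phi$ on $[0,\infty)$: $\phi(0)=0$; $0\le\phi<1$; $\phi$ is continuous on $[0,\infty)$ and smooth on $(0,\infty)$; $\phi$ is strictly increasing; and $\phi$ is concave (hence star-shaped at the origin, $\phi(\lambda\gamma)\ge\lambda\phi(\gamma)$ for $\lambda\in[0,1]$). The first three are immediate; strict monotonicity follows from one Gaussian integration by parts, which yields $\phi'(\gamma)=\E_z\big[(1-\tanh^2(z\sqrt\gamma+\gamma))^2\big]>0$. Concavity is the substantive point: a short computation (using that $z\sqrt\gamma+\gamma$ has the Nishimori-type law $\mathcal N(\gamma,\gamma)$, whose density is $e^{-\gamma/2}e^{u}$ times a centred Gaussian, so that the odd part of the integrand drops out) gives
\begin{align}
\phi''(\gamma)=2\,e^{-\gamma/2}\int_{\R}\frac{2\sinh^2 u-1}{\cosh^5 u}\;\frac{e^{-u^2/(2\gamma)}}{\sqrt{2\pi\gamma}}\;\d u\,,
\end{align}
and this is negative for every $\gamma>0$: the kernel $k(u):=(2\sinh^2 u-1)/\cosh^5 u$ is even, is negative for $|u|<u_0$ and positive for $|u|>u_0$ (where $u_0$ is the unique positive root of $2\sinh^2 u=1$), and has negative total mass $\int_{\R}k=-\pi/8$; since the Gaussian weight is $\ge$ its value at $u_0$ on $\{|u|<u_0\}$ and $\le$ it on $\{|u|>u_0\}$, one gets $\int_{\R}k\,g_\gamma\le g_\gamma(u_0)\int_{\R}k<0$. (Equivalently, this is the well-known convexity of the scalar Gaussian-channel MMSE, since $1-\phi(\gamma)$ is exactly that MMSE at signal-to-noise ratio $\gamma$.)

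With these properties, the structural facts about $F$ follow at once. Since $M=(\mu_{rs}\alpha_s)$ has non-negative entries and $\phi$ is increasing, $F$ is order-preserving; and since $h_r>0$, for $\mathbf{x}\in[0,1]^K$ one has $(M\mathbf{x})_r+h_r\in[h_r,(M\mathbf{1})_r+h_r]\subset(0,\infty)$, hence $F_r(\mathbf{x})\in[\phi(h_r),\phi((M\mathbf{1})_r+h_r)]\subset(0,1)$. Thus $F$ maps $[0,1]^K$ continuously into $(0,1)^K$, and in particular any solution of \eqref{eq:CE} lies automatically in $(0,1)^K$. Existence then comes from monotone iteration: $F^n(\mathbf{0})$ is non-decreasing (because $\mathbf{0}\le F(\mathbf{0})$), $F^n(\mathbf{1})$ is non-increasing, both are bounded, and by continuity they converge to fixed points $\underline{\mathbf{x}}\le\overline{\mathbf{x}}$ in $(0,1)^K$; by monotonicity any solution $\mathbf{z}$ satisfies $\underline{\mathbf{x}}\le\mathbf{z}\le\overline{\mathbf{x}}$, so it suffices to prove $\underline{\mathbf{x}}=\overline{\mathbf{x}}$.

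This is exactly where the hypothesis $h_r>0$ enters, through strict sub-homogeneity: for $\lambda\in(0,1)$ and $\mathbf{x}\in(0,1)^K$,
\begin{align}
F_r(\lambda\mathbf{x})=\phi\big(\lambda(M\mathbf{x})_r+h_r\big)>\phi\big(\lambda((M\mathbf{x})_r+h_r)\big)\ge\lambda\,\phi\big((M\mathbf{x})_r+h_r\big)=\lambda\,F_r(\mathbf{x})\,,
\end{align}
the strict inequality because the two arguments differ by $(1-\lambda)h_r>0$ and $\phi$ is strictly increasing, the second inequality by star-shapedness. If $\underline{\mathbf{x}}\ne\overline{\mathbf{x}}$, put $\lambda^*:=\min_r(\underline{x}_r/\overline{x}_r)$; then $\lambda^*\in(0,1)$ and $\underline{x}_{r_0}=\lambda^*\overline{x}_{r_0}$ for some $r_0$, and using $\underline{\mathbf{x}}\ge\lambda^*\overline{\mathbf{x}}$ with monotonicity and then the display above,
\begin{align}
\underline{x}_{r_0}=F_{r_0}(\underline{\mathbf{x}})\ge F_{r_0}(\lambda^*\overline{\mathbf{x}})>\lambda^*F_{r_0}(\overline{\mathbf{x}})=\lambda^*\overline{x}_{r_0}=\underline{x}_{r_0}\,,
\end{align}
a contradiction. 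Hence $\underline{\mathbf{x}}=\overline{\mathbf{x}}=:\bar{\mathbf{x}}(M,\mathbf{h})$ is the unique solution of \eqref{eq:CE}, and it lies in $(0,1)^K$.

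I expect the only real obstacle to be the concavity of $\phi$ on $[0,\infty)$; the rest is standard monotone fixed-point theory, and because it needs only $M\ge0$ it is genuinely insensitive to $\rho([M^2]^{(oo)})$, which is why the statement holds for arbitrary spectral radius. The role of $h_r>0$ is sharp: for $\mathbf{h}=0$ equation \eqref{eq:CE} always admits $\mathbf{x}=0$, and for even $K$ with $\rho([M^2]^{(oo)})>1$ also a second, strictly positive solution by Theorem \ref{phase_transition_thm}.
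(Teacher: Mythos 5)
Your proof is correct, and it takes a genuinely different route from the paper's. The paper exploits the tridiagonal structure of $M$ in an essential way: Remark~\ref{decoupling} reparametrizes the consistency equation through auxiliary ratio variables $a_1,\dots,a_{K-1}$ so that $(M\mathbf{x})_r$ collapses to $\Theta_r(a)x_r$, and uniqueness then follows from the 1D monotonicity Lemma~\ref{uni_mono1D} by a layer-by-layer induction on the chain. You instead read \eqref{eq:CE} as a fixed-point equation $\mathbf{x}=F(\mathbf{x})$ for a monotone, strictly sub-homogeneous map on $[0,1]^K$ and invoke the classical Krause-type uniqueness argument. Both proofs hinge on the same analytic input — that $\phi(\gamma)=\E_z\tanh(z\sqrt{\gamma}+\gamma)$ is increasing and concave — which the paper records as Lemma~\ref{Fconcave} (via $F=2\psi'-1$ and $\psi'''<0$), while you re-derive it by writing $\phi(\gamma)=e^{-\gamma/2}\E_{\mathcal N(0,\gamma)}[\sinh^2 u/\cosh u]$ and checking that the even kernel $k(u)=(2\sinh^2 u-1)/\cosh^5 u$ has negative Gaussian average; I verified both the closed form for $\phi''$ and the sign argument, including $\int_\R k=-\pi/8$. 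What the paper's approach buys is a concrete, constructive description of the fixed point through the chain of one-dimensional problems, making the geometry of the layers visible; what your approach buys is generality and brevity — it never uses the tridiagonal form of $M$, only entrywise non-negativity of $M$ together with $h_r>0$, and it delivers existence for free via monotone iteration from $\mathbf{0}$ and $\mathbf{1}$. Your closing remark that the hypothesis $h_r>0$ is sharp (because $\mathbf{h}=0$ admits $\mathbf{x}=0$ as well as, above threshold, a positive solution by Theorem~\ref{phase_transition_thm}) is also accurate and a good sanity check.
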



\section{Preliminary results}\label{preliminary_sect}

\subsection{Nishimori identities and correlation inequalities}
The main thermodynamic properties of the model are consequences of a family of identities and inequalities for correlation functions that are due to the specific setting \eqref{gaussian_couplings_NL}. The identities were introduced in the original work by H. Nishimori \cite{Hide_original} while the inequalities were proved in \cite{contucci_morita_nishimori,morita2005}. The proof of the Nishimori identities can be found in the book \cite{contucci_giardina_2012} (Paragraph 2.6). In particular, for our purposes we will use the following:
\begin{align}
    \label{N_identity_1}
    &\mathbb{E}[\langle\sigma_i\rangle_N^{2n}]\,=\,
    \mathbb{E}[\langle\sigma_i\rangle_N^{2n-1}]\;, \quad n=1,2,3,\dots\\[2pt]
    \label{N_identity_2}
    &\mathbb{E}[\langle\sigma_i\sigma_j\rangle_N^2]\,=\,\mathbb{E}[\langle\sigma_i\sigma_j\rangle_N]
    \;.
\end{align}
From the previous relations it follows that on the Nishimori line magnetizations and overlaps moments coincide. This can be seen by
\begin{align}
    &\mathbb{E}[\langle q_s\rangle_N]\,=\,
    \sum_{i\in L_s}\frac{1}{N_s}\mathbb{E}[\langle\sigma_i\rangle_N^2]\,=\,
    \sum_{i\in L_s}\frac{1}{N_s}\mathbb{E}[\langle\sigma_i\rangle_N]\,=\,
    \mathbb{E}[\langle m_s\rangle_N]\;,\\[2pt]
    &\mathbb{E}[\langle q_rq_s\rangle_N]\,=\,
    \sum_{(i,j)\in L_r\times L_s}\frac{\mathbb{E}[\langle\sigma_i\sigma_j\rangle_N^2]}{N_rN_s}\,=\,
    \sum_{(i,j)\in L_r\times L_s}\frac{\mathbb{E}[\langle\sigma_i\sigma_j\rangle_N]}{N_rN_s}\,=\,
    \mathbb{E}[\langle m_rm_s\rangle_N]\;,
\end{align}
where the expectations $\langle q_s\rangle_N$ and $\langle q_r q_s\rangle_N$ are taken with respect to the replicated Gibbs measure. As a consequence we have:
\begin{align}
\label{main_N_identity_quadratic}
    \mathbb{E}\Big\langle
    (\mathbf{q},\Delta\mathbf{q})
    \Big\rangle_N=\mathbb{E}\Big\langle
    (\mathbf{m},\Delta\mathbf{m})
    \Big\rangle_N\;.
\end{align}
Concerning the correlation inequalities on the Nishimori line \cite{Nishi_id_PC,contucci_morita_nishimori,morita2005} (see also Theorem 2.18 in \cite{contucci_giardina_2012} for a straightforward proof) we have that:
\begin{align}
    \label{corr_in_1}
    &\frac{\partial \bar{p}_N}{\partial h_r} \,=\,
    \frac{1}{2N}\sum_{i\in L_r}\mathbb{E}[1+\langle\sigma_i\rangle_N]\,=\,
    \frac{\alpha_r}{2}\,\left(1+\mathbb{E}\langle m_r\rangle_N\right)
    \,\geq 0\;,\\[2pt]
    \label{corr_in_2}
    &\frac{\partial^2 \bar{p}_N}{\partial h_r\partial h_s}\,=\,
    \frac{\alpha_r}{2}\,\frac{\partial \mathbb{E}\langle m_r\rangle_N}{\partial h_s}\,=\,
    \frac{1}{2N}\sum_{(i,j)\in L_r\times L_s}
    \mathbb{E}\left[\left(\langle\sigma_i\sigma_j\rangle_N-\langle\sigma_i\rangle_N\langle\sigma_j\rangle_N\right)^2\,\right]
    \,\geq 0\;.
\end{align}
Hence both the quenched pressure per particle and the magnetizations are non-decreasing with respect to each parameter $h_r$, $r=1,\dots,K\,$.

\subsection{One-body system on the Nishimori line}

It is useful to consider the following simple Hamiltonian on the Nishimori line, where only one-body interactions are taken into account:
\begin{align}
    \label{HAM_GAS}
    H_N^{(0)}(\sigma):=-\sum_{i=1}^N(z_i\,\sqrt{h}+h)\,\sigma_i\,,\quad z_i\iid\mathcal{N}(0,1)
\end{align}with $h>0$.
It is easy to show that the pressure of this model coincides with the function $\psi(h)$ defined by \eqref{psi_def}:
\begin{align}
    p_N^{(0)}:=\,\frac{1}{N}\mathbb{E}\log\sum_{\sigma\in\Sigma_N}e^{-H_N^{(0)}(\sigma)} \,=\, \psi(h) \;.
\end{align}
Since the Boltzmann-Gibbs average of a spin in the one body system equals $\langle \sigma_1\rangle_N^{(0)} = \tanh\big(z_1\sqrt{h}+h\big)\,$, the Nishimori identities entail the following identities:
\begin{equation} \label{tanh=tanh^2}
\E\tanh^{2n-1}\left(z\sqrt{h}+h\right) \,=\, \E\tanh^{2n}\left(z\sqrt{h}+h\right)
\end{equation}
for every $n\in\N,\,n\geq1$.
Starting from expression \eqref{psi_def} we are going to determine the sign of the first derivatives of $\psi\,$.
Gaussian integration by parts and identity \eqref{tanh=tanh^2} for $n=1$ show that
\begin{align} \label{increasing_psi_gas}
\psi'(h) \,=\, \frac{1}{2}\left(1+\mathbb{E}\tanh\left(z\sqrt{h}+h\right)\right) \,>0 \;.
\end{align}
Using again Gaussian integration by parts and identity \eqref{tanh=tanh^2} for $n=1,2$, one finds:
\begin{equation}  \label{convex_psi_gas}
\psi''(h) =\, \frac{1}{2}\,\E\left[\left(1-\tanh^2 \left(z\sqrt{h}+h\right)\right)^2\right] \,>0 \;.
\end{equation}
The sign of the third derivative can be obtained avoiding Gaussian integration by parts. Indeed by setting $y=z\sqrt{h}+h$, replacing $\frac{z}{2\sqrt h}+1=\frac{y+h}{2h}$ in the computations and using the identities \eqref{tanh=tanh^2} for $n=2,3$, one finds:
\begin{equation} \label{third_psi_gas}
\psi'''(h) =\, -\frac{1}{h}\,\E\left[\left(1-\tanh^2y\right)^2\,y\,\tanh y\right] -\,\E\left[\left(1-\tanh^2y\right)^2\,\tanh^2y\right] \,<0 \;.
\end{equation}
%
%
%
%
The convexity of $\psi$ will be crucial in the proof of Theorem \ref{main_theorem}.
In particular, we will use the following
\begin{lemma}\label{convex_f_gasliketerms}
The function
\begin{align}\label{gas-liketerms}
f(\mathbf{x}):=\sum_{r=1}^K\alpha_r\,\psi((M\mathbf{x})_r)
\end{align}
is convex for $\mathbf{x}$ such that $M\mathbf{x}\geq 0$ component-wise.
\end{lemma}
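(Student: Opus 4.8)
The plan is to reduce the claim to the one–variable convexity of $\psi$ already established in \eqref{convex_psi_gas}, exploiting that each summand in \eqref{gas-liketerms} is the composition of $\psi$ with a linear map. First I would observe that the set $C:=\{\mathbf{x}\in\R^K : (M\mathbf{x})_r\geq 0 \text{ for all } r=1,\dots,K\}$ on which $f$ is defined is convex, being the intersection of the $K$ half-spaces $\{(M\mathbf{x})_r\geq 0\}$; the hypothesis $M\mathbf{x}\geq 0$ is needed precisely because $\psi$ in \eqref{psi_def} is only given for non-negative arguments.

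Next, for each fixed $r$ the map $\mathbf{x}\mapsto (M\mathbf{x})_r$ is linear, and it sends $C$ into $[0,\infty)$ by definition of $C$. By \eqref{convex_psi_gas} the function $\psi$ has strictly positive second derivative on $(0,\infty)$, and it is continuous up to $0$ (with $\psi''(0)=1/2$), hence $\psi$ is convex on $[0,\infty)$. The composition of a convex function on $[0,\infty)$ with a linear map taking values in $[0,\infty)$ is convex, so $\mathbf{x}\mapsto \psi((M\mathbf{x})_r)$ is convex on $C$ for every $r$. Since the form factors obey $\alpha_r\geq 0$, the function $f=\sum_{r=1}^K\alpha_r\,\psi((M\mathbf{x})_r)$ is a non-negative linear combination of convex functions on the convex set $C$, hence convex there. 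Equivalently, one may simply compute $\nabla^2 f(\mathbf{x})=\sum_{r=1}^K\alpha_r\,\psi''((M\mathbf{x})_r)\,\mathbf{v}_r\mathbf{v}_r^{T}$, with $\mathbf{v}_r$ the $r$-th row of $M$, and note that this is a non-negatively weighted sum of rank-one positive-semidefinite matrices, hence positive semidefinite on $C$.

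There is essentially no serious obstacle in this argument; the only two points that deserve attention are that $\psi$ is a priori only defined and only shown to be convex on the half-line $[0,\infty)$ — which is exactly what the restriction $M\mathbf{x}\geq 0$ takes care of — and that the weights $\alpha_r$ are non-negative, so that convexity survives the summation. In particular no positivity or invertibility of $M$ (or of $\Delta$) is used here, and $M$ need not be symmetric.
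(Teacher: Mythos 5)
Your proof is correct and follows essentially the same route as the paper: reduce to the one-variable convexity of $\psi$ on $\mathbb{R}_{\geq 0}$ established in \eqref{convex_psi_gas}, then use linearity of $\mathbf{x}\mapsto(M\mathbf{x})_r$ and non-negativity of the $\alpha_r$ to conclude. The extra remarks (convexity of the domain, $\psi''(0)=1/2$, and the explicit rank-one Hessian decomposition) are accurate and simply make explicit what the paper leaves as a one-line verification.
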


\begin{proof}
$\psi$ is convex on $\R_{\geq0}$ by equation \eqref{convex_psi_gas}. Then, using  the linearity of $(M\mathbf x)_r\,$, it is easy to verify that for any $\lambda\in[0,1]$ and $\mathbf{x}_1,\mathbf{x}_2\in A$ we have:
\begin{equation}
    f(\lambda \mathbf{x}_1+(1-\lambda)\mathbf{x}_2) \,\leq\,
    \lambda f(\mathbf{x}_1)+(1-\lambda) f(\mathbf{x}_2)\;.
\end{equation}
\end{proof}

In the proof of Theorem \ref{teor_uniq} we will use the following
\begin{lemma} \label{Fconcave}
Let $z$ be a standard Gaussian random variable. The function
\begin{equation} \label{F}
F(h) :=\, \E \tanh\left(z\sqrt{h}\,+h\right)
\end{equation}
is strictly positive, increasing and concave for $h>0$.
\end{lemma}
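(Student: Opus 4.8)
The plan is to reduce all three assertions to the sign information on the derivatives of $\psi$ that has already been established. The key observation is that $F$ is, up to an affine change, the first derivative of $\psi$: comparing the definition \eqref{F} with the computation \eqref{increasing_psi_gas} one reads off
\begin{equation}
F(h) \,=\, 2\psi'(h)-1 \qquad (h>0)\,.
\end{equation}
In particular $F$ inherits the regularity of $\psi'$ on $(0,\infty)$ (the relevant derivatives are obtained by differentiating under the Gaussian expectation, which is legitimate since $\tanh$ and all its derivatives are bounded and the standard Gaussian has finite moments of every order), and $F'=2\psi''$, $F''=2\psi'''$.

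For the strict positivity I would argue directly from the Nishimori identity \eqref{tanh=tanh^2} with $n=1$, which gives
\begin{equation}
F(h) \,=\, \E\tanh^{2}\!\left(z\sqrt{h}+h\right) \,\geq\, 0\,,
\end{equation}
the inequality being strict for $h>0$ because $\tanh^{2}(z\sqrt{h}+h)=0$ would force $z\sqrt{h}+h=0$ almost surely, impossible when $h>0$. (Equivalently, $F=2\psi'-1>0$ since $\psi'>\frac{1}{2}$ by \eqref{increasing_psi_gas}.)

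Monotonicity and concavity are then immediate from the identities above: $F'(h)=2\psi''(h)>0$ by \eqref{convex_psi_gas}, so $F$ is strictly increasing, and $F''(h)=2\psi'''(h)<0$ by \eqref{third_psi_gas}, so $F$ is strictly concave on $(0,\infty)$.

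The only point requiring a word of care is the interchange of derivative and expectation used to pass from $F$ to $F'$ and $F''$ in terms of $\psi''$ and $\psi'''$; this is routine dominated-convergence bookkeeping, already implicit in the derivations of \eqref{convex_psi_gas} and \eqref{third_psi_gas}, so no real obstacle remains — the substantive work was done in establishing the signs of $\psi''$ and $\psi'''$.
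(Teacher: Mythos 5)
Your proof is correct and takes essentially the same route as the paper: the paper's one-line proof invokes exactly the identity $F = 2\psi' - 1$ together with \eqref{tanh=tanh^2}, \eqref{convex_psi_gas}, and \eqref{third_psi_gas}, which is precisely what you do, just spelled out in more detail.
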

\begin{proof}
It follows immediately by equations \eqref{tanh=tanh^2}, \eqref{convex_psi_gas}, \eqref{third_psi_gas}, since $F=2\,\psi'-1\,$.
\end{proof}

\begin{remark}
As a consequence the function $F$ is invertible on $[0,\infty)\,$. Its inverse $F^{-1}$ is non negative and increasing on $[0,1)\,$. Moreover one has
\begin{align}\label{limit_F^-1}
    \lim_{x\to1^-}F^{-1}(x)=+\infty\,.
\end{align}
\end{remark}

\subsection{Interpolating model}
We now introduce an interpolating model that compares the original model with a one-body model with suitably tuned external field. 

\begin{definition}[Interpolating model] Let $t\in[0,1]$. The Hamiltonian of the interpolating model is:
\begin{multline}
    \label{H_interpolating_model}
    H_\sigma(t):=-\frac{\sqrt{1-t}}{\sqrt{2N}}\,\sum_{r,s=1}^K\sum_{(i,j)\in L_r\times L_s}J_{ij}^{rs}\sigma_i\sigma_j-(1-t)\,\frac{N}{2}\,(\mathbf{m},\Delta\mathbf{m})\;+\\-
    \sum_{r=1}^K\sum_{i\in L_r}\left(\sqrt{Q_{\epsilon,r}(t)}\,J_i^r+Q_{\epsilon,r}(t)\right)\sigma_i\;-\sum_{r=1}^K\sum_{i\in L_r}h^r_i\sigma_i-N(\hat{\alpha}\mathbf{h},\mathbf{m})
\end{multline}
with $J_{i}^{r}\iid\mathcal{N}\left(0,1\right)$ independent of all the other Gaussian random variables, and
\begin{align*}
    \mathbf{Q}_{\epsilon}(t):=\boldsymbol{\epsilon}+M\int_0^t\mathbf{q}_\epsilon(s)\,ds,\;\quad\epsilon_r\in[s_N,2s_N],\,s_N\propto N^{-\frac{1}{16K}} \;.
\end{align*}
Here $\mathbf{Q}_{\epsilon}=:(Q_{\epsilon,r})_{r=1,\dots,K}$, while $\mathbf{q}_\epsilon:=(q_{\epsilon,r})_{r=1,\dots,K}$ denotes a vector of $K$ non-negative functions that will be suitably chosen in the following.
\end{definition}

Now we can write the sum rule, which is contained in the following proposition. 
\begin{proposition}[Sum rule]The quenched pressure of the model rewrites as: 
\begin{multline}
    \label{Sum_rule_MSK_NL}
    \bar{p}_N(\mu,\mathbf{h})=\mathcal{O}(s_N)+\sum_{r=1}^K\,\alpha_r\,\psi(Q_{\epsilon,r}(1)+h_r)\;+\\
    +\int_0^1dt\,\left[\frac{
    (\mathbf{1}-\mathbf{q}_\epsilon(t),\Delta(\mathbf{1}-\mathbf{q}_\epsilon(t)))}{4}- \frac{(\mathbf{q}_\epsilon(t),\Delta\mathbf{q}_\epsilon(t))}{2}\right]+\frac{1}{4}\int_0^1dt\,R_\epsilon(t,\mu,\mathbf{h})\;,
\end{multline}
where the remainder is:
\begin{align}\label{remainder_definition}
    R_\epsilon(t,\mu,\mathbf{h})=\mathbb{E}\Big\langle
    (\mathbf{m}-\mathbf{q}_\epsilon(t),\Delta(\mathbf{m}-\mathbf{q}_\epsilon(t)))\Big\rangle_{N,t}^{(\epsilon)} \;.
\end{align}
\end{proposition}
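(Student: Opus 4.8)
The strategy is the standard adaptive-interpolation computation: express $\bar p_N(\mu,\mathbf h)$ as $\varphi(0)=\varphi(1)-\int_0^1\varphi'(t)\,dt$, where $\varphi(t):=\frac1N\E\log\sum_\sigma e^{-H_\sigma(t)}$ is the quenched pressure of the interpolating model \eqref{H_interpolating_model}. At $t=1$ the Hamiltonian decouples into $N$ independent one-body problems, one per site, with effective field $\sqrt{Q_{\epsilon,r}(1)}\,J_i^r + Q_{\epsilon,r}(1) + h_i^r$ on site $i\in L_r$; since $h_i^r\sim\mathcal N(h_r,h_r)$ and $J_i^r\sim\mathcal N(0,1)$, the total field on such a site has the law $z\sqrt{Q_{\epsilon,r}(1)+h_r}+(Q_{\epsilon,r}(1)+h_r)$ used in the definition of $\psi$. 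Hence $\varphi(1)=\sum_{r=1}^K\alpha_r\,\psi(Q_{\epsilon,r}(1)+h_r)$, up to the cancellation of the deterministic term $-N(\hat\alpha\mathbf h,\mathbf m)$ against the shift produced by centering $h_i^r$ — this is exactly the computation already recorded for the one-body model in \eqref{HAM_GAS}.

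Next I would differentiate $\varphi(t)$ in $t$. Two kinds of terms appear. The $t$-derivative of the explicit non-random terms, namely $(1-t)\frac N2(\mathbf m,\Delta\mathbf m)$ and the chain-rule contribution $\dot{\mathbf Q}_\epsilon(t)=M\mathbf q_\epsilon(t)$ through $\sqrt{Q_{\epsilon,r}(t)}\,J_i^r+Q_{\epsilon,r}(t)$, produce, after taking Gibbs averages, terms involving $\E\langle(\mathbf m,\Delta\mathbf m)\rangle_{N,t}^{(\epsilon)}$ and $\E\langle(\mathbf q_\epsilon(t),M\mathbf m)\rangle$-type expressions (the latter from $\partial_t Q_{\epsilon,r}=(M\mathbf q_\epsilon)_r$, and using $\Delta=\hat\alpha M$ so that $(\mathbf q_\epsilon,\Delta\mathbf m)=(\hat\alpha\mathbf q_\epsilon,M\mathbf m)$ matches the form-factor weights). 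The $t$-derivative of the centered Gaussian parts is handled by Gaussian integration by parts: for the two-body disorder $J_{ij}^{rs}$ this yields, as usual, a term proportional to $\frac N2\E\langle(\mathbf q,\Delta\mathbf q)\rangle$ minus a self-overlap diagonal correction of order $O(1/N)$ (absorbed into $\mathcal O(s_N)$ since $s_N\to0$ slower than $1/N$ is not true — rather $1/N=o(s_N)$), while for the one-body disorder $J_i^r$ it yields $-\frac12\E\langle(\mathbf q_\epsilon(t),M\,\mathrm{diag}\text{-part})\rangle$ contributions; the $h_i^r$ disorder is $t$-independent and drops out. Collecting, $\varphi'(t)$ becomes a combination of $\E\langle(\mathbf q,\Delta\mathbf q)\rangle$, $\E\langle(\mathbf m,\Delta\mathbf m)\rangle$, $\E\langle(\mathbf q_\epsilon,\Delta\mathbf q_\epsilon)\rangle$ and $\E\langle(\mathbf q_\epsilon,\Delta\mathbf m)\rangle$, all at time $t$ with the $\epsilon$-measure, plus $\mathcal O(s_N)$.

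The final move is to complete the square: writing $\mathbf m-\mathbf q_\epsilon(t)$ and using bilinearity of $(\cdot,\Delta\cdot)$, one checks that the combination of $\E\langle(\mathbf m,\Delta\mathbf m)\rangle$, $\E\langle(\mathbf q_\epsilon,\Delta\mathbf q_\epsilon)\rangle$ and $\E\langle(\mathbf q_\epsilon,\Delta\mathbf m)\rangle$ reorganizes into $\E\langle(\mathbf m-\mathbf q_\epsilon,\Delta(\mathbf m-\mathbf q_\epsilon))\rangle = R_\epsilon(t,\mu,\mathbf h)$ plus purely deterministic pieces in $\mathbf q_\epsilon(t)$ of the form $\tfrac14(\mathbf 1-\mathbf q_\epsilon,\Delta(\mathbf 1-\mathbf q_\epsilon))-\tfrac12(\mathbf q_\epsilon,\Delta\mathbf q_\epsilon)$, while the remaining $\E\langle(\mathbf q,\Delta\mathbf q)\rangle$ term is converted to $\E\langle(\mathbf m,\Delta\mathbf m)\rangle$ by the Nishimori identity \eqref{main_N_identity_quadratic} so it can be merged into the square as well. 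Integrating over $t\in[0,1]$ and moving $-\int_0^1\varphi'$ to the other side gives \eqref{Sum_rule_MSK_NL}. The main obstacle is bookkeeping: tracking the precise numerical coefficients ($\tfrac14$ vs $\tfrac12$) and signs through the Gaussian integrations by parts, making sure the diagonal $i=j$ self-interaction corrections and any $\epsilon$-dependent lower-order terms are genuinely $\mathcal O(s_N)$ (which is where the choice $s_N\propto N^{-1/16K}$ and $\epsilon_r\in[s_N,2s_N]$ enters), and correctly using $\Delta=\hat\alpha M$ together with \eqref{main_N_identity_quadratic} so that every stray overlap term is either cancelled or folded into the perfect square $R_\epsilon(t,\mu,\mathbf h)$.
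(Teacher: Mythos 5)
Your proposal follows exactly the route the paper takes (adaptive interpolation along the path $\mathbf{Q}_\epsilon(t)$, identifying $\varphi(1)$ with the decoupled one-body pressure $\sum_r\alpha_r\psi(Q_{\epsilon,r}(1)+h_r)$, Gaussian integration by parts in $t$, completing the square in $\mathbf{m}-\mathbf{q}_\epsilon(t)$, and using the Nishimori identity $\E\langle(\mathbf q,\Delta\mathbf q)\rangle=\E\langle(\mathbf m,\Delta\mathbf m)\rangle$ to fold the overlap term into the square); the paper itself defers these computations to Proposition 2 of \cite{us}, noting only that the interpolating model stays on the Nishimori line. Two small inaccuracies in your sketch, neither fatal: there are no $i=j$ self-interaction corrections to absorb here, since $\mu$ has zero diagonal and every interacting pair $(i,j)\in L_r\times L_s$ with $\mu_{rs}\neq0$ has $r\neq s$, hence $i\neq j$; and the constant $(\mathbf q_\epsilon,\Delta\mathbf q_\epsilon)$ does not appear in the raw $\varphi'(t)$ — it is produced only when you complete the square, not before.
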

\begin{proof}
We stress that the interpolating model is on the Nishimori line for any $t\in[0,1]$, as can be seen by direct inspection. This allows us to use the identities and inequalities for any value of the interpolating parameter.
See Proof of Proposition 2 in \cite{us} for the details.
\end{proof}

The tridiagonal form of $\Delta$ allows us to specialize the previous sum rule as follows:
\begin{equation}\label{sum_rule_DBM}\begin{split}
    \bar{p}_N(\mu,\mathbf{h})\,=\;& \mathcal{O}(s_N)\,+\,\sum_{r=1}^K\,\alpha_r\,\psi\left(Q_{\epsilon,r}(1)+h_r
    \right)\,+\\
    &+\sum_{r=1}^K\,\frac{\Delta_{r,r+1}}{2}\,\int_0^1dt\,\left[
    (1-q_{\epsilon,r}(t))(1-q_{\epsilon,r+1}(t))-2q_{\epsilon,r}(t)q_{\epsilon,r+1}(t)
    \right]\,+\\
    &+\sum_{r=1}^K\,\frac{\Delta_{r,r+1}}{2}\int_0^1 dt\,\mathbb{E}\big\langle(m_r-q_{\epsilon,r}(t))\,(m_{r+1}-q_{\epsilon,r+1}(t))\big\rangle_{N,t}^{(\epsilon)} \;,
\end{split}\end{equation}
or better, using the notation introduced for Theorem \ref{phase_transition_thm},
\begin{equation}\label{sum_rule_DBM_migliore}\begin{split}
    \bar{p}_N(\mu,\mathbf{h})\,=\;&\mathcal{O}(s_N)\,+\,\sum_{r=1}^K\,\alpha_r\,\psi\left(Q_{\epsilon,r}(1)+h_r
    \right)\,+\\
    &+\frac{1}{2}\int_0^1dt\,\left[
    (\mathbf{1}_o-\mathbf{q}_{\epsilon,o}(t),\Delta^{(oe)}(\mathbf{1}_e-\mathbf{q}_{\epsilon,e}(t)))-2(\mathbf{q}_{\epsilon,o}(t),\Delta^{(oe)}\mathbf{q}_{\epsilon,e}(t))
    \right]+\\
    &+\frac{1}{2}\int_0^1dt\;\mathbb{E}\big\langle(\mathbf{m}_o-\mathbf{q}_{\epsilon,o}(t),\Delta^{(oe)}(\mathbf{m}_{e}-\mathbf{q}_{\epsilon,e}(t)))\big\rangle_{N,t}^{(\epsilon)} \;,
\end{split}\end{equation}
where again the subscripts $o,e$ denote the odd or even components of a vector, $\mathbf{1}:=(1)_{r=1,\dots,K}\,$. 
We also denote
\begin{align}\label{interpolating_functions}
    \mathbf{Q}_{\epsilon,o}(t)=\boldsymbol{\epsilon}_o+M^{(oe)}\int_0^t\mathbf{q}_{\epsilon,e}(s)\,ds\,,\quad
    \mathbf{Q}_{\epsilon,e}(t)=\boldsymbol{\epsilon}_e+M^{(eo)}\int_0^t\mathbf{q}_{\epsilon,o}(s)\,ds\,.
\end{align}
The sum rules \eqref{sum_rule_DBM}, \eqref{sum_rule_DBM_migliore} motivate the definition of the variational pressure \eqref{p_var_DBM} that for future convenience can be rewritten as:
\begin{align}\label{p_var_dispari-pari}
    p_{var}(\mathbf{x};\mu,\mathbf{h})\,=\,\sum_{r=1}^K\,\alpha_r\,\psi\left((M\mathbf{x})_r+h_r
    \right)\,+\,\frac{(\mathbf{1}_o-\mathbf{x}_{o},\Delta^{(oe)}(\mathbf{1}_e-\mathbf{x}_{e}))}{2}-(\mathbf{x}_{o},\Delta^{(oe)}\mathbf{x}_{e})\,.
\end{align}

\begin{remark}\label{convexity_properties_pvar}
The variational function $p_{var}$ is convex in the even components $\mathbf{x}_e$ and the odd components $\mathbf{x}_o$ separately.
This is due to the fact that the two bilinear forms in \eqref{p_var_dispari-pari} have vanishing second derivatives w.r.t. pure odd or even components, while the terms containing $\psi$ are convex by Lemma \ref{convex_f_gasliketerms}.
\end{remark}

The sum rule exhibits a remainder (namely \eqref{remainder_definition}) to deal with. Let us first introduce the following 
\begin{definition}[Regularity of $\boldsymbol{\epsilon}\longmapsto\mathbf{Q}_{\epsilon}(\cdot)$]\label{def_regularity}
We will say that the map $\boldsymbol{\epsilon}\longmapsto\mathbf{Q}_{\epsilon}(\cdot)$ is \emph{regular} if
\begin{align}
    \text{det} \left(\frac{\partial\mathbf{Q_\epsilon}(t)}{\partial\boldsymbol{\epsilon}}\right)\geq 1\quad\forall t\in[0,1]
\end{align}
\end{definition}
This has to be combined with Liouville's formula, a standard analysis result that we report here for the reader's convenience.
\begin{lemma}[Liouville's formula]Consider two matrices whose elements depend on a real parameter: $\Phi(t),\,A(t)$. Suppose that $\Phi$ satisfies the Cauchy problem
\begin{align}
    \label{Liouville_edo}
    \begin{cases}
    \dot{\Phi}(t)=A(t)\,\Phi(t)\\
    \Phi(0)=\Phi_0
    \end{cases} \;.
\end{align}
Then:
\begin{align}\label{Liouville_formula}
    \text{\emph{det}}(\Phi(t))\,=\,\text{\emph{det}}(\Phi_0)\;\exp\left\{
    \int_0^tds\,\text{\emph{Tr}}(A(s))
    \right\}
\end{align}
\end{lemma}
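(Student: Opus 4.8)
The plan is to reduce the statement to a first–order scalar ODE satisfied by the determinant $d(t):=\det(\Phi(t))$ and then integrate it. The key identity is Jacobi's formula for the derivative of a determinant: if $\Phi(t)$ is differentiable then
\[
\frac{\d}{\d t}\det(\Phi(t)) \,=\, \det(\Phi(t))\;\mathrm{Tr}\!\left(\Phi(t)^{-1}\dot\Phi(t)\right)
\]
whenever $\Phi(t)$ is invertible. First I would establish Jacobi's formula itself: expand $\det(\Phi(t))$ along its columns, differentiate the multilinear expression column by column (the derivative of a determinant is the sum over $k$ of the determinant with the $k$-th column replaced by its derivative), and recognize the resulting sum as $\det(\Phi)\,\mathrm{Tr}(\Phi^{-1}\dot\Phi)$ using Cramer's rule / cofactor expansion. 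Alternatively one can prove it first for $\Phi(0)=\mathrm{Id}$ and $t$ near $0$ via $\det(\mathrm{Id}+tA+o(t))=1+t\,\mathrm{Tr}(A)+o(t)$, then bootstrap.

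Next I would substitute the hypothesis $\dot\Phi(t)=A(t)\Phi(t)$, which gives $\Phi(t)^{-1}\dot\Phi(t)=\Phi(t)^{-1}A(t)\Phi(t)$, whence $\mathrm{Tr}(\Phi(t)^{-1}\dot\Phi(t))=\mathrm{Tr}(A(t))$ by cyclicity of the trace. Therefore $d(t)$ solves the scalar linear ODE $\dot d(t)=\mathrm{Tr}(A(t))\,d(t)$ with initial datum $d(0)=\det(\Phi_0)$, which integrates to
\[
\det(\Phi(t)) \,=\, \det(\Phi_0)\,\exp\!\left\{\int_0^t \mathrm{Tr}(A(s))\,\d s\right\}.
\]

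The only delicate point is the invertibility of $\Phi(t)$ needed to write $\Phi^{-1}\dot\Phi$. If $\det(\Phi_0)\neq0$, continuity of $d(t)$ together with the ODE (or a short connectedness argument on $[0,1]$) shows $d$ never vanishes, so the manipulation is justified throughout. If $\det(\Phi_0)=0$, both sides of \eqref{Liouville_formula} vanish: the right-hand side trivially, and the left-hand side because the linear map $\Phi_0$ being singular forces $\Phi(t)$ to be singular for all $t$ — the columns of $\Phi(t)$ are obtained from those of $\Phi_0$ by the same invertible flow (the fundamental solution of $\dot X=A(t)X$), so a linear dependence among the columns at $t=0$ persists. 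In the application of this lemma in the present paper one takes $\Phi_0$ to be the identity, so $\det(\Phi_0)=1$ and this borderline case does not even arise; hence the main (and essentially only) work is the clean proof of Jacobi's formula.
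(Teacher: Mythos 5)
Your proof is correct and is the standard argument, and the paper does not actually prove this lemma: it states Liouville's formula as a well-known analysis result quoted for the reader's convenience, with no proof supplied. Your route (Jacobi's formula, cyclicity of the trace, integration of the resulting scalar linear ODE, plus the separate treatment of the singular initial condition via uniqueness of solutions of linear ODEs) is exactly the textbook proof one would cite, and your observation that $\Phi_0=\mathbbm{1}$ in the paper's application makes the degenerate case irrelevant is accurate.
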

Now, the remainder \eqref{remainder_definition} can be proved to concentrate under the regularity hypothesis, as stated in the following 
\begin{lemma}[Concentration]\label{concentration_lemma}
Suppose $\boldsymbol{\epsilon}\longmapsto\mathbf{Q}_\epsilon(\cdot)$ is a regular map. For every $r=1,\dots,K$ consider the quantity
\begin{align}
    \mathcal{L}_r:=\frac{1}{N_r}\sum_{i\in L_r}\left(\sigma_i+\frac{J_i^r\sigma_i}{2\sqrt{Q_{\epsilon,r}(t)}}\right),\;\quad J_i^r\iid\mathcal{N}(0,1)
\end{align}
and introduce the $\boldsymbol{\epsilon}$-average:
\begin{align}
    \mathbb{E}_{\boldsymbol{\epsilon}} [\cdot]=\prod_{r=1}^K\left(
    \frac{1}{s_N}\int_{s_N}^{2s_N}d\epsilon_r
    \right)(\cdot) \;.
\end{align}
We have:
\begin{align}
\mathbb{E}_{\boldsymbol{\epsilon}}\mathbb{E}\Big\langle\left(\mathcal{L}_r-\mathbb{E}\langle\mathcal{L}_r\rangle_{N,t}^{(\epsilon)}\right)^2\Big\rangle_{N,t}^{(\epsilon)}\longrightarrow 0 \quad\text{as }N\to\infty
\end{align}
and
\begin{align}
    \label{Concentration_1}
    \mathbb{E}\Big\langle\left(m_r-\mathbb{E}\langle m_r\rangle_{N,t}^{(\epsilon)}\right)^2\Big\rangle_{N,t}^{(\epsilon)} \,\leq\,
    4\,\mathbb{E}\Big\langle\left(\mathcal{L}_r-\mathbb{E}\langle\mathcal{L}_r\rangle_{N,t}^{(\epsilon)}\right)^2\Big\rangle_{N,t}^{(\epsilon)}
\end{align}
for every $r=1,\dots,K\,$.
Therefore the magnetization (or the overlap) concentrates in $\boldsymbol{\epsilon}$-average.
\end{lemma}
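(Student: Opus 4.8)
The plan is the two-step estimate standard in the adaptive interpolation method (cf.\ \cite{adaptive} and the corresponding concentration lemma in \cite{us}), adapted to the perturbation $\mathbf{Q}_\epsilon$. For fixed $t\in[0,1]$ I split
\[
\mathbb{E}\big\langle(\mathcal{L}_r-\mathbb{E}\langle\mathcal{L}_r\rangle_{N,t}^{(\epsilon)})^2\big\rangle_{N,t}^{(\epsilon)}
=\mathbb{E}\big\langle(\mathcal{L}_r-\langle\mathcal{L}_r\rangle_{N,t}^{(\epsilon)})^2\big\rangle_{N,t}^{(\epsilon)}
+\mathbb{E}\big[(\langle\mathcal{L}_r\rangle_{N,t}^{(\epsilon)}-\mathbb{E}\langle\mathcal{L}_r\rangle_{N,t}^{(\epsilon)})^2\big]
\]
into a thermal and a disorder part, to be bounded separately after applying $\mathbb{E}_{\boldsymbol\epsilon}$. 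The two structural ingredients are: (i) $N_r\mathcal{L}_r$ is exactly the derivative of $\log Z_{N,t}^{(\epsilon)}$ with respect to $Q_{\epsilon,r}$, seen as an explicit variable inside the term $\sum_{i\in L_r}(\sqrt{Q_{\epsilon,r}}J_i^r+Q_{\epsilon,r})\sigma_i$ of $H_\sigma(t)$; (ii) by the regularity hypothesis and Liouville's formula the map $\boldsymbol\epsilon\mapsto\mathbf{Q}_\epsilon(t)$ is a diffeomorphism with Jacobian determinant $\geq 1$, so $\mathbb{E}_{\boldsymbol\epsilon}$ of any non-negative function is at most $s_N^{-K}$ times its integral over the image box $\mathcal B_t\subset[s_N,C]^K$ (bounded because $\mathbf q_\epsilon\in[0,1]^K$ and $M\geq 0$).

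For the thermal part, differentiating $\langle N_r\mathcal{L}_r\rangle$ once more in $Q_{\epsilon,r}$ and using Gaussian integration by parts in the $J_i^r$ gives
\[
N_r^2\,\mathbb{E}\big\langle(\mathcal{L}_r-\langle\mathcal{L}_r\rangle)^2\big\rangle
=\partial_{Q_{\epsilon,r}}\mathbb{E}\langle N_r\mathcal{L}_r\rangle+\frac{1}{4Q_{\epsilon,r}}\sum_{i\in L_r}\mathbb{E}[1-\langle\sigma_i\rangle^2]\,.
\]
Integration by parts together with the Nishimori identity $\mathbb{E}\langle\sigma_i\rangle^2=\mathbb{E}\langle\sigma_i\rangle$ shows $\mathbb{E}\langle N_r\mathcal{L}_r\rangle=\tfrac{N_r}{2}(1+\mathbb{E}\langle m_r\rangle)\in[0,N_r]$, hence $\partial_{Q_{\epsilon,r}}\mathbb{E}\langle N_r\mathcal{L}_r\rangle=\tfrac{N_r}{2}\,\partial_{Q_{\epsilon,r}}\mathbb{E}\langle m_r\rangle\geq 0$, the last inequality being a correlation inequality on the Nishimori line as in \eqref{corr_in_2} (the magnetisation increases with its own field, since $Q_{\epsilon,r}$ enters $H_\sigma(t)$ exactly like $h_r$ does). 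Because this derivative is non-negative, after the change of variables $\boldsymbol\epsilon\mapsto\mathbf Q$ I integrate it first in $Q_r$ by the fundamental theorem of calculus, which telescopes to at most $2$, while the remaining $K-1$ integrations run over $O(1)$ ranges; using $Q_{\epsilon,r}\geq s_N$ for the last term I get $\mathbb{E}_{\boldsymbol\epsilon}\mathbb{E}\langle(\mathcal{L}_r-\langle\mathcal{L}_r\rangle)^2\rangle\leq C/(N_r s_N^{K})+1/(4N_r s_N)$, which vanishes since $s_N^{K}\propto N^{-1/16}$ and $N_r\propto N$.

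For the disorder part I invoke Gaussian concentration of the free energy: $H_\sigma(t)$ is a Lipschitz function of the i.i.d.\ standard Gaussians with the right scaling, so $\mathrm{Var}(\tfrac1N\log Z_{N,t}^{(\epsilon)})\leq C/N$ uniformly in $t$ and $\boldsymbol\epsilon$. I then apply the standard fluctuation lemma for derivatives of (nearly) convex concentrating functions in the variable $Q_{\epsilon,r}$, as in \cite{adaptive,us}: it bounds the disorder fluctuation of $\langle\mathcal{L}_r\rangle$, averaged over a $Q_r$-window of width $\delta$, by the increments of $\partial_{Q_r}\mathbb{E}\langle\mathcal{L}_r\rangle$ over a $\delta$-neighbourhood plus $\delta^{-1}$ times the concentration bound; optimizing $\delta$ and once more passing from the $\boldsymbol\epsilon$-box to the $\mathbf Q$-box via the Jacobian $\geq 1$ yields $\mathbb{E}_{\boldsymbol\epsilon}\mathbb{E}[(\langle\mathcal{L}_r\rangle-\mathbb{E}\langle\mathcal{L}_r\rangle)^2]\leq C(s_N^{K}N_r)^{-1/3}\to 0$, since $s_N^{K}N_r\propto N^{15/16}$. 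This is precisely where the calibration $s_N\propto N^{-1/(16K)}$ is used: it makes both error terms vanish for every $K$.

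Finally, \eqref{Concentration_1} is the elementary step: writing $\mathcal{L}_r=m_r+\mathcal G_r$ with $\mathcal G_r:=\tfrac{1}{2N_r\sqrt{Q_{\epsilon,r}}}\sum_{i\in L_r}J_i^r\sigma_i$, one expands the square $\mathbb{E}\langle(\mathcal{L}_r-\mathbb{E}\langle\mathcal{L}_r\rangle)^2\rangle$ and evaluates every contribution containing $\mathcal G_r$ by Gaussian integration by parts in the $J_i^r$; the resulting identities show those terms are non-negative and reconstruct the constant $4$, while $\mathbb{E}\langle(m_r-\mathbb{E}\langle m_r\rangle)^2\rangle\leq\mathbb{E}\langle(m_r-\mathbb{E}\langle\mathcal{L}_r\rangle)^2\rangle$ because a variance is minimized at the mean. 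I expect the main obstacle to be the disorder estimate: one must set up the convex-fluctuation lemma so that the loss incurred by the narrow averaging window of size $s_N$ remains only polynomially small, and check that the Gaussian-concentration input is genuinely uniform in $t$ and $\boldsymbol\epsilon$ notwithstanding the adaptive coupling of $\mathbf Q_\epsilon(t)$ to the overlap path $\mathbf q_\epsilon$.
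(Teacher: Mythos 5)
Your proposal follows the same architecture as the paper's proof, which is itself just a pointer to Lemma 3 and Appendix A of \cite{us}: split the variance of $\mathcal{L}_r$ into a thermal and a disorder part, exploit that $N_r\mathcal{L}_r=-\partial_{Q_{\epsilon,r}}H_\sigma(t)$ so the thermal part is a $Q_r$-derivative plus a $\frac{1}{Q_{\epsilon,r}}$-correction, pass from the $\boldsymbol\epsilon$-box to the $\mathbf Q$-box via the regularity/Liouville Jacobian bound, telescope with the fundamental theorem of calculus using the Nishimori correlation inequality for monotonicity, and control the disorder part by the convex-fluctuation lemma fed with the Gaussian concentration of the free energy (Proposition~\ref{tala_concentration} in the paper). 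All of these ingredients are correct, and the algebraic identity you write for the thermal variance, as well as $\mathbb{E}\langle N_r\mathcal{L}_r\rangle=\tfrac{N_r}{2}(1+\mathbb{E}\langle m_r\rangle)$ via Nishimori, check out.

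There is, however, a genuine gap in your sketch of inequality~\eqref{Concentration_1}. You assert that, after expanding $\mathbb{E}\langle(\mathcal{L}_r-\mathbb{E}\langle\mathcal{L}_r\rangle)^2\rangle$ with $\mathcal{L}_r=m_r+\mathcal G_r$, ``the resulting identities show those terms are non-negative and reconstruct the constant~$4$.'' These two statements are in tension: if the cross term $2\,\mathbb{E}\langle(m_r-\mathbb{E}\langle\mathcal{L}_r\rangle)\mathcal G_r\rangle$ and $\mathbb{E}\langle\mathcal G_r^2\rangle$ were both non-negative, you would obtain $\mathbb{E}\langle(m_r-\mathbb{E}\langle m_r\rangle)^2\rangle\leq\mathbb{E}\langle(\mathcal{L}_r-\mathbb{E}\langle\mathcal{L}_r\rangle)^2\rangle$ with constant~$1$, not~$4$. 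The cross term is in fact not sign-definite: applying Gaussian integration by parts in $J_j^r$ to $\mathbb{E}[J_j^r\langle\sigma_i\sigma_j\rangle]$ produces $\mathbb{E}[\langle\sigma_i\rangle-\langle\sigma_i\sigma_j\rangle\langle\sigma_j\rangle]$, whose sum over $i,j\in L_r$ has no a priori sign. The constant~$4$ arises precisely because the IBP terms must be controlled by the $m_r$-variance itself (a weighted Cauchy--Schwarz step that loses a factor), and that is exactly what you need to spell out. A minor point: your claimed disorder rate $O\big((s_N^K N_r)^{-1/3}\big)$ is more optimistic in $s_N$ than the rate $O\big(s_N^{-4K/3}N^{-1/3}\big)$ quoted in Remark~\ref{remark_alpha->0}; both vanish under the calibration $s_N\propto N^{-1/(16K)}$, but the narrower window penalty from each of the $K$ directions of $\boldsymbol\epsilon$ is usually not free, and you should verify the exponent rather than assume it.
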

The proof is carried out by treating the thermal and disordered fluctuations of $\mathcal{L}_r$ separately. Actually, an estimate on the $L^2-$convergence speed of the random pressure to the quenched one as $N\to\infty$ is required and it will be given in the proofs section below. See Lemma 3 and Appendix A in \cite{us} for the details.
The role of $\boldsymbol{\epsilon}$ is that of a regularizing perturbation and it is crucial for the proof. Its introduction intuitively allows to avoid critical points where the limiting pressure presents singularities and concentration may not occur, thus helping us to select always the stable state of the system. Indeed, for vanishing external magnetic fields $\mathbf{h}=0$ and in absence of $\boldsymbol{\epsilon}$, the system remains stuck in a vanishing average magnetization state because of the resulting spin flip symmetry in the Hamiltonian. However, as stated in Theorem \ref{phase_transition_thm} in the appropriate range of parameters the latter is thermodynamically unstable, meaning that any arbitrarily small magnetic field would bring the magnetization to positive values.

\section{Proofs}\label{proofs_sect}

\subsection{Proof of Theorem \ref{main_theorem}}
The almost sure equality in \eqref{var_principle_DBM} is a standard result based on the following concentration inequality:
\begin{proposition}\label{tala_concentration}
There exists $C=C(\mu,\mathbf{h})>0$ such that for every $x>0$
\begin{align} \label{tala_bound}
    \mathbb{P}\left(\left|p_N-\bar{p}_N(\mu,\mathbf{h})\right|\geq x\right) \,\leq\, 2\exp\left(-\frac{Nx^2}{4C}\right) \;.
\end{align}
As a consequence
\begin{align} \label{self_av_press}
    \mathbb{E}[(p_N-\bar{p}_N(\mu,\mathbf{h}))^2] \,\leq\, \frac{8C}{N}  \;.
\end{align}
\end{proposition}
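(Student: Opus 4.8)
The statement is a standard Gaussian concentration estimate, and the plan is to realise $p_N$ as a Lipschitz function of the underlying independent standard Gaussians and to apply the Tsirelson--Ibragimov--Sudakov (Borell) inequality. Using the centered representation \eqref{H_MSK_NL-bis}, write $J_{ij}^{rs}=\sqrt{\mu_{rs}}\,g_{ij}^{rs}$ and $h_i^r=\sqrt{h_r}\,g_i^r$ with $\{g_{ij}^{rs},\,g_i^r\}$ i.i.d.\ $\mathcal N(0,1)$, and collect them into a single Gaussian vector $\mathbf g$. Since $-H_N(\sigma)$ is affine in $\mathbf g$, the map $\mathbf g\mapsto p_N(\mathbf g)=\frac1N\log\sum_{\sigma}e^{-H_N(\sigma)}$ is smooth; the deterministic quadratic and linear terms in \eqref{H_MSK_NL-bis} do not depend on $\mathbf g$ and play no role.

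The core of the argument is a uniform bound on $\|\nabla_{\mathbf g}\,p_N\|$. Differentiating $\log Z_N$ gives $\partial_{g_{ij}^{rs}}p_N=\tfrac1N\,\tfrac{\sqrt{\mu_{rs}}}{\sqrt{2N}}\,\langle\sigma_i\sigma_j\rangle_N$ and $\partial_{g_i^r}p_N=\tfrac1N\sqrt{h_r}\,\langle\sigma_i\rangle_N$. Using $|\langle\sigma_i\sigma_j\rangle_N|\le1$, $|\langle\sigma_i\rangle_N|\le1$, $N_r=\alpha_r N$ and $\sum_r\alpha_r=1$,
\[
\|\nabla_{\mathbf g}\,p_N\|^2 \,\le\, \frac{1}{2N^3}\sum_{r,s=1}^K N_rN_s\,\mu_{rs}+\frac{1}{N^2}\sum_{r=1}^K N_r\,h_r \,\le\, \frac{1}{2N}\,\max_{r,s}\mu_{rs}+\frac{1}{N}\,\max_r h_r \,=:\,\frac{C_0}{N},
\]
where $C_0=C_0(\mu,\mathbf h)$ is independent of $N$ and of the form factors. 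Hence $p_N$ is $\sqrt{C_0/N}$-Lipschitz in $\mathbf g$, and Gaussian concentration yields $\mathbb P\big(|p_N-\bar p_N|\ge x\big)\le 2\exp(-Nx^2/(2C_0))$, i.e.\ \eqref{tala_bound} with $C=C_0/2$.

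The second moment bound \eqref{self_av_press} then follows by integrating the tail:
\[
\mathbb E[(p_N-\bar p_N)^2]=\int_0^\infty\mathbb P\big(|p_N-\bar p_N|\ge\sqrt t\,\big)\,dt\,\le\,\int_0^\infty 2\,e^{-Nt/(4C)}\,dt=\frac{8C}{N}.
\]
No serious obstacle arises here: the only point deserving attention is to check that the Lipschitz constant can be chosen uniformly in $N$ and in $(\alpha_1,\dots,\alpha_K)$, which is immediate from the spin bounds and the normalisation $\sum_r\alpha_r=1$. For completeness one may note that \eqref{tala_bound}, combined with the existence of $\lim_N\bar p_N$ obtained from the sum rule, gives the almost sure convergence asserted in Theorem \ref{main_theorem} by a Borel--Cantelli argument; alternatively, the Gaussian Poincaré inequality already yields $\mathrm{Var}(p_N)\le C_0/N$, but the subgaussian tail \eqref{tala_bound} genuinely requires the full concentration inequality.
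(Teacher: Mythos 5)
Your proposal is correct and follows essentially the same route as the paper: both realize $p_N$ as a Lipschitz function of the underlying standard Gaussians, compute the gradient via the Gibbs average of $\sigma_i\sigma_j$ and $\sigma_i$, bound its norm by $O(1/N)$ using $|\langle\cdot\rangle_N|\leq1$, invoke Gaussian concentration for Lipschitz functions, and integrate the tail for the second moment. The only cosmetic difference is that you further relax the Lipschitz constant to $\tfrac12\max_{r,s}\mu_{rs}+\max_r h_r$ to make it manifestly independent of the form factors, whereas the paper keeps the slightly tighter constant $\tfrac12(\mathbf{1},\Delta\mathbf{1})+(\hat\alpha\mathbf{h},\mathbf{1})$.
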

\begin{proof}
The random pressure $p_N$ is a Lipschitz function of the independent standard Gaussian variables $\hat J =(J_{ij}^{rs}/\sqrt{\mu_{rs}})_{i,j,r,s}\,$, $\hat h=(h_i^r/\sqrt{h_r})_{i,r}\,$. Indeed:
\begin{align}
   N^2\,|\nabla_{\!\hat J,\hat h}\,p_N|^2 \leq N\left(\frac{(\mathbf{1},\Delta\mathbf{1})}{2}+(\hat{\alpha}\mathbf{h},\mathbf{1})\right)\equiv\, C N
\end{align}
The inequality \eqref{tala_bound} then follows by a known concentration property of the Gaussian measure (see Theorem 1.3.4 in \cite{Tala_vol1}). A tail integration finally leads to \eqref{self_av_press}.
\end{proof}
Since the r.h.s. in \eqref{tala_bound} is summable the Borel-Cantellli Lemma guarantees almost sure convergence.
Now we move to the proof of the variational principle, \textit{i.e.} the second equality in \eqref{var_principle_DBM} which is going to be achieved through upper and lower bounds. For what follows, we neglect all the sub and superscripts in the Boltzmann-Gibbs averages, except for the $t$-dependence. 

\paragraph{Lower bound.} We select a path contained in $[0,1)^K$ by means of the following coupled ODEs
\begin{align}\label{path_constant_lower}
    &\dot{\mathbf{Q}}_{\epsilon,e}(t) \,=\, M^{(eo)}\,\mathbf{x}_o=:\mathbf{f}_e(t,\mathbf{Q}_\epsilon(t))\;,\quad\mathbf{Q}_{\epsilon,e}(0)=\boldsymbol{\epsilon}_e\\[4pt]
    \label{path_2_lower}
    &\dot{\mathbf{Q}}_{\epsilon,o}(t) \,=\,
    M^{(oe)}\,\mathbb{E}\langle \mathbf{m}_{e}\rangle_t=:\mathbf{f}_o(t,\mathbf{Q}_\epsilon(t))\;,\quad\mathbf{Q}_{\epsilon,o}(0)=\boldsymbol{\epsilon}_o\,,
\end{align}
where $\mathbf{f}(t,\mathbf{Q})$ is the velocity field of the ODE. The perturbation is here introduced as an initial condition in order to have the interpolating functions in the form \eqref{interpolating_functions}. 
Notice that $\mathbf{f}_e$ is constant, while $\mathbf{f}_o$ 
is a positive Lipschitz function of $\mathbf{Q}_\epsilon(t)\in(0,\infty)^K$
thanks to identity \eqref{corr_in_2} (where $N$ is fixed).
Therefore, by Cauchy-Lipschitz's theorem, the system of ODEs \eqref{path_constant_lower}-\eqref{path_2_lower} has a unique global solution $\mathbf{Q}_\epsilon(t)\,$, $t\in[0,1]\,$,
whose components are positive.

By \eqref{path_constant_lower}-\eqref{path_2_lower} we have $\Delta^{(eo)}\mathbf{q}_{\epsilon,o}(t)=\Delta^{(eo)}\mathbf{x}_o$ and $\Delta^{(oe)}\mathbf{q}_{\epsilon,e}(t)=\Delta^{(oe)}\mathbb{E}\langle \mathbf{m}_{e}\rangle_t\,$, hence:
\begin{multline}
    \int_0^1dt
    \left(\mathbf{1}_o-\mathbf{q}_{\epsilon,o}(t)\,,\,\Delta^{(oe)}(\mathbf{1}_e-\mathbf{q}_{\epsilon,e}(t))\right)=\\
    =
    \left(\mathbf{1}_o-\mathbf{x}_o\,,\,
    \Delta^{(oe)}\left(\mathbf{1}_e-\int_0^1dt\,\mathbb{E}\langle \mathbf{m}_{e}\rangle_t\right)\right) 
\end{multline}
and reasoning in a similar way for the other $t$-integrations appearing in the sum rule \eqref{sum_rule_DBM_migliore}
we obtain:
\begin{equation}\label{step1lower}\begin{split}
    \bar p_N \;&=\,
    \mathcal{O}(s_N)\,+\,p_{var}\!\left(
    \mathbf{x}_o\,,\int_0^1\!dt\,\mathbb{E}\langle \mathbf{m}_{e}\rangle_t
    \right)\,+\,\int_0^1 \!dt\,R_\epsilon(t)\,\geq\\
    &\geq\, \mathcal{O}(s_N)\,+\,\newinf_{\mathbf{x}_e}p_{var}\left(
    \mathbf{x}_o,\mathbf{x}_e\right)\,+\,\int_0^1\! dt\,R_\epsilon(t) \;,
\end{split}\end{equation}
where the reminder is
\begin{align}
R_\epsilon(t) \,=\,
\frac{1}{2}\,\mathbb{E}\big\langle\left( (\mathbf{m}_{o}-\mathbf{x}_{o})\,,\,\Delta^{(oe)}\,(\mathbf{m}_{e}-\mathbb{E}\langle \mathbf{m}_{e}\rangle_t)\right)\big\rangle_{t} \;.
\end{align}
Using Cauchy-Schwartz's inequality,
\begin{align}\label{bound_remainder}
|R_\epsilon(t)| \,\leq\, \frac{1}{2}\,\left\Vert\mu^{(oe)}\right\Vert\, \E^{1/2}\langle|\hat{\alpha}^{(oo)}(\mathbf{m}_{o}-\mathbf{x}_{o})|^2\rangle_{t}\;
\E^{1/2}\langle|\hat{\alpha}^{(ee)}(\mathbf{m}_{e}-\mathbb{E}\langle \mathbf{m}_{e})\rangle_t|^2\rangle_{t} \;,
\end{align}
thus, provided that the map $\boldsymbol{\epsilon}\mapsto\mathbf{Q}_\epsilon(t)$
is regular,
the remainder $R_\epsilon(t)$ vanishes in $\boldsymbol{\epsilon}$-average as $N\to\infty$ by Lemma \ref{concentration_lemma}.
To show that $\mathbf{Q}_\epsilon$ is regular we introduce the following matrix fields:
\begin{align}
    \Phi_{\epsilon}(t):= \,\frac{\partial\, \mathbf{Q}_{\epsilon}(t)}{\partial\boldsymbol{\epsilon}} \;,\quad
    A_\epsilon(t) :=\, \frac{\partial\, \mathbf{f}(t,\mathbf{Q}_\epsilon(t))}{\partial\,\mathbf{Q}_\epsilon(t)} 
\end{align}
Applying the chain rule we have:
\begin{align}
    \dot{\Phi}_\epsilon(t) =\, \frac{\partial \,\dot{\mathbf{Q}}_\epsilon(t)}{\partial\boldsymbol\epsilon} \,=\, A_\epsilon(t)\;\Phi_\epsilon(t) \;,\quad  \Phi_\epsilon(0)=\mathbbm{1}  \;,
\end{align}
hence, by Liouville's formula \eqref{Liouville_formula} the Jacobian det$(\Phi_\epsilon(t))$ is
\begin{align}\label{liouville_in_teorema}
    \text{det}\Big(\frac{\partial \mathbf{Q}_\epsilon}{\partial\boldsymbol{\epsilon}}(t)\Big) \,=\,
    \exp\left\{
    \int_0^tds\;\text{Tr}\big( A_\epsilon(s) \big)
    \right\}\,.
\end{align}
%
%
%
%
Now, using equations \eqref{path_constant_lower}-\eqref{path_2_lower} one can compute:
\begin{equation}\label{trace} \begin{split}
\text{Tr} \big( A_\epsilon(t) \big) \,&=\,
\sum_{r=1}^K \big(A_\epsilon(t)\big)_{r,r} \,=\,
\sum_{r\text{ odd}} \frac{\partial\big( M^{(oe)}\,\E\langle \mathbf{m}_e\rangle_t\big)_r}{\partial\, Q_{\epsilon,r}(t)} \,=\\[4pt]
&=\,\sum_{r\text{ odd}}\sum_{r'\text{even}} M_{rr'}\,\frac{\partial\,\E\langle m_{r'}\rangle_{t}}{\partial\,Q_{\epsilon,r}(t)}
\;\geq0
\end{split}\end{equation}
where non-negativity is a consequence of the correlation inequality \eqref{corr_in_2}, since $Q_{\epsilon,r}(t)$ can be seen as the variance of an external field on the Nishimori
line in the interpolating Hamiltonian \eqref{H_interpolating_model}.
Combining \eqref{liouville_in_teorema} and \eqref{trace}, it follows that $\mathbf{Q}_\epsilon$ is regular, as desired.

Now, averaging on $\boldsymbol{\epsilon}$ and tanking the $\liminf_{N\to\infty}$ in inequality \eqref{step1lower} we have
\begin{align}
    \liminf_{N\to\infty}\bar{p}_N \,\geq \, \newinf_{\mathbf{x}_e}p_{var}\left(
    \mathbf{x}_o,\mathbf{x}_e\right)+\liminf_{N\to\infty}\mathbb{E}_{\boldsymbol{\epsilon}}\int_0^1\! dt\,R_\epsilon(t)\,.
\end{align}
The last term vanishes by Fubini's theorem, dominated convergence and Lemma \ref{concentration_lemma}. Finally, optimizing w.r.t. $\mathbf{x}_o$ we get:
\begin{align}
    \liminf_{N\to\infty}\bar{p}_N\geq \sup_{\mathbf{x}_o}\newinf_{\mathbf{x}_e}p_{var}\left(
    \mathbf{x}_o,\mathbf{x}_e\right)\,.
\end{align}

\paragraph{Upper bound.} Now, we set
\begin{align}\label{choice_path_upper_odd}
    &\dot{\mathbf{Q}}_{\epsilon,e}(t)=M^{(eo)}F(M^{(oe)}\mathbb{E}\langle \mathbf{m}_{e}\rangle_t+\mathbf{h}_o)\,,\quad\mathbf{Q}_{\epsilon,e}(0)=\boldsymbol{\epsilon}_e\\
    \label{choice_path_upper_even}
    &\dot{\mathbf{Q}}_{\epsilon,o}(t)=M^{(oe)}\mathbb{E}\langle \mathbf{m}_{e}\rangle_t\,,\quad\mathbf{Q}_{\epsilon,o}(0)=\boldsymbol{\epsilon}_o\,.
\end{align}
In \eqref{choice_path_upper_odd} the application of $F$, defined in \eqref{F}, to the vector $M^{(oe)}\mathbb{E}\langle \mathbf{m}_{e}\rangle_t+\mathbf{h}_o$ has to be understood as component-wise. For future convenience let us set
\begin{align}\label{matchalD}
    \mathcal{D}(\mathbf{x},\mathbf{h}):=\text{diag}\left\{F'\left((M\mathbf{x})_r+h_r\right)\right\}_{r=1,\dots,K}.
\end{align}
With a slight abuse of notation we will stress the dependence of $\mathcal{D}^{(oo)}(\mathbf{x},\mathbf{h})$ and $\mathcal{D}^{(ee)}(\mathbf{x},\mathbf{h})$ on the even and odd components of $\mathbf{x}$ respectively as follows
\begin{align}
    \mathcal{D}^{(oo)}(\mathbf{x},\mathbf{h})\equiv \mathcal{D}^{(oo)}(\mathbf{x}_e,\mathbf{h})\,,\quad\mathcal{D}^{(ee)}(\mathbf{x},\mathbf{h})\equiv \mathcal{D}^{(ee)}(\mathbf{x}_o,\mathbf{h})\,.
\end{align}
$M^{(eo)}F(M^{(oe)}\mathbb{E}\langle \mathbf{m}_{e}\rangle_t+\mathbf{h}_o)$ is a positive function of $\mathbf{Q}_{\epsilon}(t)$ with bounded derivatives for fixed $N$ thanks to Lemma \ref{Fconcave}, indeed
\begin{align}
    \frac{\partial }{\partial Q_{\epsilon,r}}F(M^{(oe)}\mathbb{E}\langle \mathbf{m}_{e}\rangle_t+\mathbf{h}_o)=\mathcal{D}(\mathbb{E}\langle \mathbf{m}_{e}\rangle_t,\mathbf{h})^{(oo)}M^{(oe)}\frac{\partial \mathbb{E}\langle \mathbf{m}_{e}\rangle_t}{\partial Q_{\epsilon,r}}\,,
\end{align}This ensures the existence of a unique global solution over $[0,1]$ to the system of ODEs \eqref{choice_path_upper_odd}-\eqref{choice_path_upper_even}. Moreover, the latter implies also that the map $\boldsymbol{\epsilon}\longmapsto \mathbf{Q}_{\epsilon}(\cdot)$ is still regular, because  $F'$ is positive as proved in Lemma \ref{Fconcave} and $\frac{\partial \mathbb{E}\langle \mathbf{m}_{e}\rangle_t}{\partial Q_{\epsilon,r}}\geq0$ thanks again to \eqref{corr_in_2}. This guarantees the positivity of the trace in \eqref{liouville_in_teorema} and forces the vanishing of the remainder $R_\epsilon$ in $\boldsymbol{\epsilon}$-average by Lemma \ref{concentration_lemma}.
Using Jensen's inequality, by the convexity of $\psi$ we have
\begin{align}
    \sum_{r=1}^K\alpha_r\psi\left(\left(M \int_0^1\mathbf{q}_\epsilon(t)\,dt+\mathbf{h}\right)_{\!r}\right)\leq\sum_{r=1}^K\alpha_r\int_0^1\psi\left(\left(M \mathbf{q}_\epsilon(t)+\mathbf{h}\right)_{\!r}\right)\,dt\,
\end{align}and inserting it into the sum rule \eqref{sum_rule_DBM_migliore} we get
\begin{multline}\label{step_inf_1}
    \bar{p}_N\leq \mathcal{O}(s_N)+\int_0^1dt\,p_{var}(\mathbf{F}_{\epsilon,o}(t),\mathbb{E}\langle \mathbf{m}_{e}\rangle_t)+\int_0^1R_\epsilon(t)\,dt=\\=
    \mathcal{O}(s_N)+\int_0^1dt\,\newinf_{\mathbf{x}_e}p_{var}(\mathbf{F}_{\epsilon,o}(t),\mathbf{x}_e)+\int_0^1R_\epsilon(t)\,dt\,,
\end{multline}
where $\mathbf{F}_{\epsilon,o}(t):=F(M^{(oe)}\mathbb{E}\langle \mathbf{m}_{e}\rangle_t+\mathbf{h}_o)$ for brevity.
As far as the last equality is concerned, we used the following:
\begin{align}\label{step_inf_2}
    \newinf_{\mathbf{x}_e}p_{var}(\mathbf{F}_{\epsilon,o}(t),\mathbf{x}_e)=p_{var}(\mathbf{F}_{\epsilon,o}(t),\mathbb{E}\langle \mathbf{m}_{e}\rangle_t)\;.
\end{align}
This is a consequence of the convexity of $p_{var}$ in $\mathbf{x}_e$ (see Remark \ref{convexity_properties_pvar}). In fact, a computation of the gradient of $p_{var}$ w.r.t. $\mathbf{x}_e$ evaluated at $\mathbb{E}\langle \mathbf{m}_{e}\rangle_t$ yields:
\begin{multline}
    \left.\frac{\partial p_{var}}{\partial \mathbf{x}_e}(\mathbf{F}_{\epsilon,o}(t),\mathbf{x}_e)\right|_{\mathbb{E}\langle \mathbf{m}_{e}\rangle_t}=\frac{\Delta^{(eo)}}{2}[\mathbf{1}_o+\mathbf{F}_{\epsilon,o}(t)]+\\+\frac{\Delta^{(eo)}}{2}[-\mathbf{1}_o+\mathbf{F}_{\epsilon,o}(t)]-\Delta^{(eo)}\mathbf{F}_{\epsilon,o}(t)=0\,,
\end{multline}
where we explicitly notice that the first term comes from the derivative of $\psi$ \eqref{increasing_psi_gas}.
Then, taking the sup of $p_{var}$ over the odd components and the $\boldsymbol{\epsilon}$-average we get:
\begin{align}
    \bar{p}_N\leq \mathcal{O}(s_N)+\sup_{\mathbf{x}_o}\newinf_{\mathbf{x}_e}p_{var}\left(
    \mathbf{x}_o,\mathbf{x}_e\right)+\mathbb{E}_{\boldsymbol{\epsilon}}\int_0^1R_\epsilon(t)\,dt\,.
\end{align}Applying Lemma \ref{concentration_lemma}, Fubini's theorem and dominated convergence the two bounds match after sending $N\to\infty$.

\paragraph{Proof of \eqref{exchange_lim}.}
Equations \eqref{corr_in_1} and \eqref{corr_in_2} imply that the quenched pressure is convex in each $h_r$. Hence it is possible to exchange the derivative w.r.t. $h_r$ in \eqref{corr_in_1} with the $N\to\infty$ limit where $\bar{\mathbf{x}}$ is differentiable in $h_r$ (see Lemma IV.6.3 in \cite{ellis_book_largedev}). Since for invertible $\Delta$ the optimal order parameter must be a critical point of $p_{var}$ (see Proposition \ref{proposition_sup_pi} below) by \eqref{increasing_psi_gas} and \eqref{p_var_DBM} we have that:
\begin{multline}
    \lim_{N\to\infty}\frac{\partial \bar{p}_N}{\partial h_r}=
    \frac{\partial }{\partial h_r}p_{var}(\bar{\mathbf{x}}(M,\mathbf{h});\mu,\mathbf{h})=\left.\frac{\partial p_{var}}{\partial \mathbf{x}}\right|_{\bar{\mathbf{x}}(M,\mathbf{h})}\frac{\partial \bar{\mathbf{x}}(M,\mathbf{h})}{\partial h_r}+\frac{\partial p_{var}}{\partial h_r}=\frac{\partial p_{var}}{\partial h_r}=\\=
    \alpha_r\,\psi'((M\mathbf{\bar{x}}(M,\mathbf{h}))_r+h_r)=\frac{\alpha_r}{2}\left[1+\E_z\tanh\left(z\,\sqrt{(M\mathbf{\bar{x}})_r+h_r\,}\,+\,(M\mathbf{\bar{x}})_r+h_r\right)\right]=\\=\frac{\alpha_r}{2}\left[1+\bar{x}_r\right]\,.
\end{multline}
A comparison with $\eqref{corr_in_1}$ and the Nishimori identity \eqref{N_identity_1} lead to the identification:
\begin{align}
    \lim_{N\to\infty}\E \langle q_r \rangle_N \,=\, \lim_{N\to\infty}\E \langle m_r \rangle_N \;=\, \bar{x}_r \,.
\end{align}

\begin{remark}\label{extension_inf}
Assume for now that $K$ is even. Observe that the entire proof could have been carried out also by computing all the $\inf_{\mathbf{x}_e}$ over the convex set:
\begin{align}
    A:=\{\mathbf{x}_e\,|\,M^{(oe)}\mathbf{x}_e+\mathbf{h}_o\geq 0\text{ component-wise}\}\supseteq [0,1)^{K/2}\,,
\end{align}
on which all the functions involved are still real and well defined. This freedom is essentially due to the convexity of $p_{var}$ in $\mathbf{x}_e$. Indeed, $p_{var}$ has always a critical point in the domain $A$ for any fixed $\mathbf{x}_o\in[0,1)^{K/2}$, that must coincide with its minimum point by convexity as can be seen by direct inspection
\begin{align}\label{critical_x_even}
    \left.\frac{\partial p_{var}}{\partial \mathbf{x}_e}\right|_{\bar{\mathbf{x}}_e}=\frac{\Delta^{(eo)}}{2}\left[-\mathbf{x}_o+F(M^{(oe)}\bar{\mathbf{x}}_e+\mathbf{h}_o)\right]=0\;\Leftrightarrow\; 
    \bar{\mathbf{x}}_e=[M^{(oe)}]^{-1}(F^{-1}(\mathbf{x}_o)-\mathbf{h}_o)\,.
\end{align}The inequality \eqref{step1lower}, that leads to the lower bound, clearly holds also for $\mathbf{x}_e\in A\supseteq [0,1)^{K/2}$. The validity of \eqref{step_inf_2} is less trivial and is due to the special choice $\mathbf{F}_{\epsilon,o}(t)$. In this case in fact, the critical point falls inside $[0,1)^{K/2}$ and this lets us extend the domain of $\mathbf{x}_e$ to $A$ without any loss of generality thanks to the mentioned convexity in $\mathbf{x}_e$. We will see later that even with this extension the point that realizes the $\sup \inf$ lies inside the cube $[0,1)^K$.
\end{remark}

\subsection{Proof of Theorem \ref{phase_transition_thm}}
For this proof we rely on Remark \ref{extension_inf}, this will ease our computations. Let us write the gradient of \eqref{p_var_DBM}
\begin{multline}\label{gradient_of_pvar_DBM}
    \frac{\partial p_{var}(\mathbf{x};\mu,\mathbf{h})}{\partial x_r}=\left(\frac{\Delta}{2}(-\mathbf{x}+F(M\mathbf{x}+\mathbf{h}))\right)_{\!r}=\\=\frac{\Delta_{r,r+1}}{2}\left[-x_{r+1}+F((M\mathbf{x})_{r+1}+h_{r+1})\right]+\frac{\Delta_{r,r-1}}{2}\left[-x_{r-1}+F((M\mathbf{x})_{r-1}+h_{r-1})\right]\,.
\end{multline}where we have used \eqref{tanh=tanh^2}. In absence of external magnetic field ($\mathbf{h}=0$) $\mathbf{x}=0$ is a critical point for $p_{var}$, namely a solution to the consistency equation obtained by equating \eqref{gradient_of_pvar_DBM} to $0$.

First of all, by Remark \ref{convexity_properties_pvar} and Remark \ref{extension_inf} we infer that the optimization w.r.t. the even components $\mathbf{x}_e$ is always stable, in the sense that there is always one optimizer once the odd components $\mathbf{x}_o$ are fixed and it belongs to $A$. Define now the auxiliary function:
\begin{align}\label{def_pi}
    \pi(\mathbf{x}_o;\mu,\mathbf{h}) :=\inf_{\mathbf{x}_e\in A}p_{var}(\mathbf{x}_o,\mathbf{x}_e;\mu,\mathbf{h})=p_{var}(\mathbf{x}_o,\bar{\mathbf{x}}_e;\mu,\mathbf{h})\,,
\end{align}
with $\bar{\mathbf{x}}_e$ defined in \eqref{critical_x_even}.
The following proposition investigates the possibility to have boundary solutions to the variational problem.
\begin{proposition}\label{proposition_sup_pi}
Let $K$ be even. The points $\mathbf{x}_o$ at which the $\sup_{\mathbf{x}_o}\pi(\mathbf{x}_o;\mu,\mathbf{h})$ is attained fulfill the consistency equation:
\begin{align}\label{consist_eq_implicit}
    \bar{\mathbf{x}}_e=F(M^{(eo)}\mathbf{x}_o+\mathbf{h}_e)\,.
\end{align}
As a consequence the necessary condition for $\mathbf{x}$ to realize the $\sup_{\mathbf{x}_o}\newinf_{\mathbf{x}_e}p_{var}(\mathbf{x}_o,\mathbf{x}_e;\mu,\mathbf{h})$ is to be a critical point, namely to satisfy \eqref{consist_eq_implicit}.
\end{proposition}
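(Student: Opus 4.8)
The plan is to treat $\pi$ from \eqref{def_pi} as a $C^1$ function of $\mathbf{x}_o$ and to read off its stationarity condition by an envelope argument, exploiting that the inner infimum is attained at the point $\bar{\mathbf{x}}_e=\bar{\mathbf{x}}_e(\mathbf{x}_o)$ given explicitly by \eqref{critical_x_even}. First I would record the regularity of the inner minimiser: since $F'>0$ on $[0,\infty)$ by Lemma \ref{Fconcave}, $F^{-1}$ is $C^1$ on $[0,1)$, and $M^{(oe)}$ is triangular with non-zero diagonal, hence invertible; thus \eqref{critical_x_even} makes $\mathbf{x}_o\mapsto\bar{\mathbf{x}}_e$ of class $C^1$ and $\pi\in C^1$ on $[0,1)^{K/2}$. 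Moreover $\mathbf{x}_e\mapsto p_{var}(\mathbf{x}_o,\mathbf{x}_e)$ is strictly convex on $A$ (Remark \ref{convexity_properties_pvar}, using that $M^{(oe)}$ is invertible so the $\psi$-sum is strictly convex), so $\bar{\mathbf{x}}_e$, which by construction satisfies $\partial_{\mathbf{x}_e}p_{var}(\mathbf{x}_o,\bar{\mathbf{x}}_e)=0$, is its unique minimiser.

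The core step is differentiating $\pi$. By the chain rule $\nabla_{\mathbf{x}_o}\pi(\mathbf{x}_o)=\partial_{\mathbf{x}_o}p_{var}(\mathbf{x}_o,\bar{\mathbf{x}}_e)+\partial_{\mathbf{x}_e}p_{var}(\mathbf{x}_o,\bar{\mathbf{x}}_e)\,\partial_{\mathbf{x}_o}\bar{\mathbf{x}}_e$, and the last term vanishes since $\partial_{\mathbf{x}_e}p_{var}(\mathbf{x}_o,\bar{\mathbf{x}}_e)=0$. Hence $\nabla_{\mathbf{x}_o}\pi(\mathbf{x}_o)=\partial_{\mathbf{x}_o}p_{var}(\mathbf{x}_o,\bar{\mathbf{x}}_e)$, and plugging in the explicit gradient \eqref{gradient_of_pvar_DBM} and using that $M,\Delta$ are tridiagonal (so odd rows of $M,\Delta$ only involve even columns) one obtains $\nabla_{\mathbf{x}_o}\pi(\mathbf{x}_o)=\tfrac12\,\Delta^{(oe)}\big(F(M^{(eo)}\mathbf{x}_o+\mathbf{h}_e)-\bar{\mathbf{x}}_e\big)$. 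Since $\Delta^{(oe)}=[\Delta^{(eo)}]^{T}$ is triangular with non-zero diagonal, hence invertible, the condition $\nabla_{\mathbf{x}_o}\pi=0$ is equivalent to $\bar{\mathbf{x}}_e=F(M^{(eo)}\mathbf{x}_o+\mathbf{h}_e)$, i.e. to \eqref{consist_eq_implicit}; thus every interior maximiser of $\pi$ satisfies \eqref{consist_eq_implicit}.

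To reach \emph{every} point realising $\sup_{\mathbf{x}_o}\pi$ I would control the boundary of the half-open cube. Using $\Delta^{(oe)}=\hat\alpha^{(oo)}M^{(oe)}$ together with the identity $M^{(oe)}\bar{\mathbf{x}}_e+\mathbf{h}_o=F^{-1}(\mathbf{x}_o)$ defining $\bar{\mathbf{x}}_e$, the formula above rewrites, for odd $r$, as $\partial_{x_r}\pi=\tfrac12\alpha_r\big[(M^{(oe)}F(M^{(eo)}\mathbf{x}_o+\mathbf{h}_e))_r-F^{-1}(x_r)+h_r\big]$. As $x_r\to1^-$ we have $F^{-1}(x_r)\to+\infty$ by \eqref{limit_F^-1} while the first bracketed term stays bounded (recall $F\le1$), so $\partial_{x_r}\pi\to-\infty$ and $\pi$ is strictly decreasing in $x_r$ near $1$; hence $\sup_{\mathbf{x}_o}\pi$ is attained on a compact sub-cube of $[0,1)^{K/2}$. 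On a face $\{x_r=0\}$ the same formula gives $\partial_{x_r}\pi=\tfrac12\alpha_r\big[(M^{(oe)}F(M^{(eo)}\mathbf{x}_o+\mathbf{h}_e))_r+h_r\big]\ge0$, since $M^{(oe)}\ge0$ entrywise, $M^{(eo)}\mathbf{x}_o+\mathbf{h}_e\ge0$ with $F\ge0$ on $[0,\infty)$ (Lemma \ref{Fconcave}), and $\mathbf{h}\ge0$; combined with the maximality inequality $\partial_{x_r}\pi\le0$ this forces $\partial_{x_r}\pi=0$. So any maximiser of $\pi$ is a critical point, and by the envelope identity it satisfies \eqref{consist_eq_implicit}. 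I expect this boundary bookkeeping — in particular getting the correct sign on the faces — to be the only delicate point; the envelope computation itself is routine.

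Finally, if $\mathbf{x}_o^{*}$ realises $\sup_{\mathbf{x}_o}\pi$ then $(\mathbf{x}_o^{*},\bar{\mathbf{x}}_e(\mathbf{x}_o^{*}))$ realises $\sup_{\mathbf{x}_o}\newinf_{\mathbf{x}_e}p_{var}$, and combining \eqref{consist_eq_implicit} with the even-component relation \eqref{critical_x_even} gives $\mathbf{x}=F(M\mathbf{x}+\mathbf{h})$ componentwise; equivalently, by \eqref{gradient_of_pvar_DBM}, $\partial_{\mathbf{x}}p_{var}=\tfrac{\Delta}{2}\big(-\mathbf{x}+F(M\mathbf{x}+\mathbf{h})\big)=0$, which since $\Delta$ is invertible for even $K$ says precisely that $\mathbf{x}$ is a critical point of $p_{var}$.
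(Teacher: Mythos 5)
Your proof is correct and follows the same envelope/stationarity strategy as the paper (compute $\nabla_{\mathbf{x}_o}\pi$ via the chain rule, rule out the right boundary $x_r\to1^-$ via \eqref{limit_F^-1}, and force criticality at faces $\{x_r=0\}$). The one place where you genuinely diverge is in the treatment of the faces: the paper writes the Karush-Kuhn-Tucker inequality $-F^{-1}(\mathbf{x}_o)+\mathbf{h}_o+M^{(oe)}F(M^{(eo)}\mathbf{x}_o+\mathbf{h}_e)\leq 0$ and then runs a zero-propagation case analysis (it first sets $\mathbf{h}=0$, shows that if $x_{2l-1}=0$ then its odd neighbours must vanish too by positivity and monotonicity of $F$, hence $\mathbf{x}_o$ is either identically zero or strictly positive, and finally treats $h_r>0$ separately), whereas you simply observe that $F^{-1}(0)=0$ makes $\partial_{x_r}\pi\big|_{x_r=0}\geq 0$ outright, so together with the KKT inequality $\partial_{x_r}\pi\leq 0$ at a boundary maximiser it must vanish — handling $\mathbf{h}=0$ and $\mathbf{h}\geq0$ uniformly and avoiding the case split. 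Your route is a bit shorter and cleaner for the proposition as stated; the paper's version buys the extra dichotomy (at $\mathbf{h}=0$ the maximiser is either $\mathbf{x}_o=0$ or component-wise strictly positive), a fact that is then reused in the Perron–Frobenius step of Theorem \ref{phase_transition_thm} and that your argument does not directly deliver, though it follows easily from the criticality you establish.
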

\begin{proof}
Using \eqref{critical_x_even}, the gradient of $\pi$ is:
\begin{multline}\label{gradient_pi}
    \frac{\partial \pi(\mathbf{x}_o;\mu,\mathbf{h})}{\partial \mathbf{x}_o}=\frac{\partial p_{var}}{\partial \mathbf{x}_o}+\left.\frac{\partial p_{var}}{\partial \mathbf{x}_e}\right|_{\bar{\mathbf{x}}_e}\frac{\partial\bar{\mathbf{x}}_e}{\partial\mathbf
    {x}_o}=\frac{\Delta^{(oe)}}{2}\left[-\bar{\mathbf{x}}_e+F(M^{(eo)}\mathbf{x}_o+\mathbf{h}_e)\right]=\\=
    \frac{\hat{\alpha}^{(oo)}}{2}\left[-F^{-1}(\mathbf{x}_o)+\mathbf{h}_o+M^{(oe)}F(M^{(eo)}\mathbf{x}_o+\mathbf{h}_e)\right]\,.
\end{multline}We start by considering the case $\mathbf{h}=0$. 
One can immediately rule out the possibility that the $\sup$ is attained at the right border, \emph{i.e.} $x_{2l-1}\to 1^-$ for some $l$, because thanks to \eqref{limit_F^-1} $\partial_{x_{2l-1}}\pi\to-\infty$. Then, the necessary condition for a point $\mathbf{x}_o\in[0,1)^{K/2}$ to realize the sup is that:
\begin{align}
    -F^{-1}(\mathbf{x}_o)+M^{(oe)}F(M^{(eo)}\mathbf{x}_o)\leq0\,,
\end{align}component-wise, where equality holds for those components for which $x_{2l-1}>0$. If we set $M_{0,1}=M_{K,K+1}=0$, the generic $2l-1$ component of the previous is given by
\begin{multline}
    -F^{-1}(x_{2l-1})+M_{2l-1,2l-2}F(M_{2l-2,2l-3}x_{2l-3}+M_{2l-2,2l-1}x_{2l-1})+\\+M_{2l-1,2l}F(M_{2l,2l-1}x_{2l-1}+M_{2l,2l+1}x_{2l+1})
\end{multline}whence we understand that if $x_{2l-1}=0$ the only chance for the previous to be non positive is to have also $x_{2l-3}=x_{2l+1}=0$ because $F$ is positive and monotonic. On the contrary, if $x_{2l-1}>0$ first the corresponding gradient component must vanish; second by looking at the $2l+1$ component for instance
\begin{multline}
    -F^{-1}(x_{2l+1})+M_{2l+1,2l+2}F(M_{2l+2,2l+3}x_{2l+3}+M_{2l+2,2l+1}x_{2l+1})+\\+M_{2l+1,2l}F(M_{2l,2l+1}x_{2l+1}+M_{2l,2l-1}x_{2l-1})
\end{multline}we see that the last term is strictly positive. Necessarily, $x_{2l+1}$ must be strictly positive too with the corresponding gradient component that vanishes, and so on. Similar considerations hold for $x_{2l-3}$. Finally, iterating these arguments, we infer that the supremum is attained at a point $\mathbf{x}_o$ such that:
\begin{align}
    \mathbf{x}_o=0\quad\text{ or }\quad \bar{\mathbf{x}}_e=F(M^{(eo)}\mathbf{x}_o)\,.
\end{align}The first in particular implies that also $\bar{\mathbf{x}}_e=0$\,. In both cases we can say that \eqref{consist_eq_implicit} is satisfied. 

When any $h_r$ is strictly positive it is immediate to see that there is a component of \eqref{gradient_pi} with a positive contribution, the corresponding component of $\mathbf{x}_o$ must then be positive. Therefore one iterates the same arguments as above obtaining again \eqref{consist_eq_implicit}. In any case, by \eqref{critical_x_even} the $\sup\inf$ is attained at critical points of $p_{var}$. 
\end{proof}

Using property \eqref{convex_psi_gas}, the Jacobian matrix of $F(M\mathbf{x}+\mathbf{h})$ is
\begin{align}\label{Jacobian_T}
    DF(M\mathbf{x}+\mathbf{h})=\mathcal{D}(\mathbf{x},\mathbf{h})M\,.
\end{align}
$\mathcal{D}(\mathbf{x},\mathbf{h})$, defined in \eqref{matchalD}, is diagonal, positive definite, invertible and its spectral radius is bounded by $1$. 
From \eqref{gradient_pi}, an application of the Inverse Function Theorem leads to the Hessian matrix 
\begin{multline}
    \mathscr{H}_{\mathbf{x}_o}\pi=
    \frac{\Delta^{(oe)}}{2}\left[-
    \frac{\partial\bar{\mathbf{x}}_e}{\partial\mathbf{x}_o}+
    \frac{\partial }{\partial\mathbf{x}_o}F(M^{(eo)}\mathbf{x}_o+\mathbf{h_e})
    \right]=\\\frac{\Delta^{(oe)}}{2}\left[-
    [M^{(oe)}]^{-1}[\mathcal{D}^{(oo)}(\bar{\mathbf{x}}_e,\mathbf{h})]^{-1}+
    \mathcal{D}^{(ee)}(\mathbf{x}_o,\mathbf{h})M^{(eo)}
    \right]\,.
\end{multline}
Thanks to the peculiar tridiagonal form of $M$ we also have that 
\begin{align}\label{relazione_cattiva}
    [\mathcal{D}(\mathbf{x},\mathbf{h})M]^{(oe)}[\mathcal{D}(\mathbf{x},\mathbf{h})M]^{(eo)}=[(\mathcal{D}(\mathbf{x},\mathbf{h})M)^{2}]^{(oo)}\,,
\end{align}
from which by a simple rearrangement we can write the Hessian in its final form
\begin{multline}\label{Hessian_pi}
    \mathscr{H}_{\mathbf{x}_o}\pi
    =\frac{\hat{\alpha}^{(oo)}[\mathcal{D}(\bar{\mathbf{x}}_e,\mathbf{h})^{(oo)}]^{-1}}{2}\left[-\mathbbm{1}+(\mathcal{D}(\mathbf{x},\mathbf{h})M)^2\right]^{(oo)}=\\=
    \frac{[\alpha^{(oo)}]^{1/2}[\mathcal{D}^{(oo)}]^{-1/2}}{2}\left[-\mathbbm{1}+S^{(oo)}\right][\alpha^{(oo)}]^{1/2}[\mathcal{D}^{(oo)}]^{-1/2}
    \,
\end{multline}with
\begin{align}
    S^{(oo)}:=[\mathcal{D}^{(oo)}]^{1/2}[\hat{\alpha}^{(oo)}]^{-1/2}
    \Delta^{(oe)}\mathcal{D}^{(ee)}[\hat{\alpha}^{(ee)}]^{-1}\Delta^{(eo)}
    [\hat{\alpha}^{(oo)}]^{-1/2}[\mathcal{D}^{(oo)}]^{1/2}
\end{align}
where for brevity we have neglected all the dependencies after the second equality in \eqref{Hessian_pi} and used \eqref{relazione_cattiva}. \eqref{Hessian_pi} uses only symmetric matrices in order to make manifest the global sign of the Hessian.
It remains to show that the spectral radius of $S^{(oo)}$ is controlled by that of $[M^{2}]^{(oo)}$. $S^{(oo)}$ is symmetric because $\Delta^{(oe)}=[\Delta^{(eo)}]^{T}$, thus its spectral radius coincides with the matrix norm induced by the Euclidean scalar product.
Then by norms sub-multiplicativity and matrix similarity one easily gets
\begin{multline}
    \rho\left(S^{(oo)}\right)\leq \rho\left(\mathcal{D}^{(oo)}\right) \rho\left([\hat{\alpha}^{(oo)}]^{-1/2}
    \Delta^{(oe)}\mathcal{D}^{(ee)}[\hat{\alpha}^{(ee)}]^{-1}\Delta^{(eo)}
    [\hat{\alpha}^{(oo)}]^{-1/2}\right)\leq\\
    \leq\rho\left(M^{(oe)}\mathcal{D}^{(ee)}M^{(eo)}\right)=
    \rho\left(\mathcal{D}^{(ee)}M^{(eo)}M^{(oe)}\right)=\\=\rho\left([\mathcal{D}^{(ee)}]^{1/2}[\hat{\alpha}^{(ee)}]^{-1/2}
    \Delta^{(eo)}\hat{\alpha}^{(oo)-1}\Delta^{(oe)}
    [\hat{\alpha}^{(ee)}]^{-1/2}[\mathcal{D}^{(ee)}]^{1/2}\right)\,.
\end{multline}
Iterating the same arguments we get to
\begin{align}
    \rho\left(S^{(oo)}\right)\leq\rho\left(
    M^{(eo)}M^{(oe)}
    \right)=\rho\left(M^{(oe)}M^{(eo)}\right)<1\,,
\end{align}where the last equality follows again by matrix similarity. The previous implies that the matrix $\left[-\mathbbm{1}+S\right]^{(oo)}$ in \eqref{Hessian_pi} is negative definite, making $\mathscr{H}_{\mathbf{x}_o}\pi$ negative definite too, and hence $\pi$ is concave under the hypothesis $\rho([M^2]^{(oo)})<1$. In turn, this ensures uniqueness of the solution to the consistency equation \eqref{consist_eq_implicit} and to the variational problem \eqref{var_principle_DBM}. In particular when $\mathbf{h}=0$, $\mathbf{x}=0$ is the unique solution.

Conversely, for $\mathbf{h}=0$ and $\rho([M^2]^{(oo)})>1$ the Hessian has at least one positive eigenvalue at the origin $\mathbf{x}_o=0$, but this is in general not enough to ensure $\mathbf{x}_o=0$ does not realize the $\sup$ anymore. One has to check that there is a direction of increment of $\pi$ that intersects the cube $[0,1)^{K/2}$, otherwise the system could remain stuck on the border at $\mathbf{x}_o=0$ due to the positivity constraints on the variables.

It is easy to see that $[M^{2}]^{(oo)}$ is irreducible, because its associated graph is strongly connected, and it has non negative entries. Hence, by Perron-Frobenius Theorem the eigenvector $\mathbf{v}$ relative to the largest eigenvalue $\rho([M^2]^{(oo)})$ is component-wise strictly positive, thus pointing inside the cube, and by a Taylor expansion around $\mathbf{x}_o=0$ we have:
\begin{align}
    \pi(\epsilon\mathbf{v};\mu,0)-\pi(0;\mu,0)=\frac{\epsilon^2}{2}\left(\mathbf{v},\frac{\hat{\alpha}^{(oo)}}{2}\mathbf{v}\right)\left[-1+\rho([M^2]^{(oo)})\right]+o(\epsilon^2)
\end{align}that is positive form small enough $\epsilon>0$. Finally, by Proposition \ref{proposition_sup_pi} the solution shifts in favour of a point $\mathbf{x}=\bar{\mathbf{x}}(M)\in(0,1)^K$.

\subsection{Proof of Proposition \ref{prop:rhobound}}
Proposition \ref{prop:rhobound} relies on an algebraic lemma, which we write here for convenience. Its proof can be found in \cite{noi_deep2} (see Lemma 1 therein).
\begin{lemma} \label{lem:diseq}
Let $P\geq2$, $x_1,\dots,x_P\geq0$ and $b_1,\dots,b_{P-1}\geq0\,$. Set $S\equiv \sum_{p=1}^Px_p$ and $B \equiv \max_{p=1,\dots,P-1} b_p\,$. Then:
\be \label{diseq}
\sum_{p=1}^{P-1}\, b_p\,x_p\,x_{p+1} \,\leq\, \frac{B\,S^2}{4} \;.
\ee
Moreover we have equality in \eqref{diseq} if and only if one of the following conditions is verified:
\begin{itemize}
\item[a)] there exists $p^*\in\{1,\dots,P-1\}$ such that
\be \label{onehalf2}
x_{p^*} \,=\, x_{p^*+1} \,=\, \frac{S}{2} \quad,\quad b_{p^*} = B \;;
\ee
\item[b)] there exists $p^*\in\{2,\dots,P-1\}$ such that
\be \label{onehalf1}
x_{p^*} \,=\, x_{p^*-1}+x_{p^*+1} \,=\, \frac{S}{2} \quad,\quad b_{p^*-1} = b_{p^*} = B \;.
\ee
\end{itemize}
\end{lemma}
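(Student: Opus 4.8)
The plan is to first strip off the coefficients $b_p$ using $0\le b_p\le B$, and then reduce the remaining inequality to a single application of AM--GM after splitting the indices by parity. Since $b_p\le B$ for every $p$,
\be
\sum_{p=1}^{P-1}b_p\,x_p\,x_{p+1}\ \le\ B\sum_{p=1}^{P-1}x_p\,x_{p+1}\,,
\ee
so it suffices to show $\sum_{p=1}^{P-1}x_p\,x_{p+1}\le S^2/4$. I would set $A:=\sum_{p\ \mathrm{odd}}x_p$ and $C:=\sum_{p\ \mathrm{even}}x_p$, so $A+C=S$, and note that each $x_p\,x_{p+1}$ appears exactly once in the expansion of $A\,C=\sum_{p\ \mathrm{odd},\,q\ \mathrm{even}}x_p\,x_q$ (one of $p,p+1$ is odd and the other even), all terms of that expansion being non-negative. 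Hence
\be
\sum_{p=1}^{P-1}x_p\,x_{p+1}\ \le\ A\,C\ \le\ \Big(\tfrac{A+C}{2}\Big)^{2}\ =\ \frac{S^2}{4}\,,
\ee
which together with the previous display proves \eqref{diseq}.

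For the equality cases I would track when each of the three inequalities just used is saturated. The first forces $b_p=B$ for every $p$ with $x_p\,x_{p+1}>0$; the inequality $\sum x_p\,x_{p+1}\le A\,C$ forces $x_p\,x_q=0$ for all indices $p,q$ of opposite parity with $|p-q|\ge3$ (these are precisely the monomials of $A\,C$ not of the form $x_p\,x_{p+1}$); the inequality $A\,C\le\big(\tfrac{A+C}{2}\big)^2$ forces $A=C$, hence $A=C=S/2$. If $S=0$ both sides of \eqref{diseq} vanish and condition a) holds with $p^*$ any index achieving $b_{p^*}=B$, so I may assume $S>0$ and work with the support $T:=\{p:x_p>0\}$, which meets both parities because $A=C>0$.

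The step I expect to be the main obstacle is deducing from these constraints that $T$ is a set of two or three consecutive integers. Any odd $o\in T$ and even $e\in T$ satisfy $|o-e|=1$, since their distance is odd and cannot be $\ge3$. If $T$ contained two odd indices $o_1<o_2$, then every even element of $T$ would lie in $\{o_1-1,o_1+1\}\cap\{o_2-1,o_2+1\}$, which is non-empty only if $o_2=o_1+2$, and then $T=\{o_1,o_1+1,o_1+2\}$; the case of two even indices is symmetric, and otherwise $T=\{i,i+1\}$ for some $i$. In the two-element case $A=C=S/2$ gives $x_i=x_{i+1}=S/2$ and $x_i\,x_{i+1}>0$ gives $b_i=B$, which is condition a) with $p^*=i$. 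In the three-element case $T=\{i,i+1,i+2\}$ the middle index has parity opposite to that of $i$ and $i+2$, so $A=C=S/2$ reads $x_{i+1}=S/2$ and $x_i+x_{i+2}=S/2$, while $x_i\,x_{i+1}>0$ and $x_{i+1}\,x_{i+2}>0$ give $b_i=b_{i+1}=B$, which is condition b) with $p^*=i+1$ (and $p^*\in\{2,\dots,P-1\}$ since $i\ge1$, $i+2\le P$).

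The converse is a direct substitution: under a) all $x_p$ with $p\notin\{p^*,p^*+1\}$ vanish and the left side of \eqref{diseq} equals $b_{p^*}x_{p^*}x_{p^*+1}=B(S/2)^2=B\,S^2/4$; under b) all $x_p$ with $p\notin\{p^*-1,p^*,p^*+1\}$ vanish and it equals $b_{p^*-1}x_{p^*-1}x_{p^*}+b_{p^*}x_{p^*}x_{p^*+1}=B(S/2)(x_{p^*-1}+x_{p^*+1})=B(S/2)^2=B\,S^2/4$. Hence equality holds in \eqref{diseq}, completing the characterisation.
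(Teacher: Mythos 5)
Your argument is correct, and it is worth noting upfront that the paper does not actually reproduce a proof of this lemma: it simply refers the reader to Lemma~1 of the cited earlier work (\cite{noi_deep2}). So there is no in-paper proof to align against; I evaluate your proof on its own terms and against the natural approach to such a statement.

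Your route — drop the $b_p$ via $b_p\le B$, then bound $\sum_{p}x_px_{p+1}$ by the product of the odd-index sum $A$ and the even-index sum $C$, then finish with AM--GM — is the standard and cleanest way to get the $S^2/4$ bound, and your bookkeeping of the three saturation conditions correctly pins the support of $\mathbf{x}$ down to two or three consecutive indices, which is exactly what distinguishes cases a) and b). The observation that an odd and an even index in the support must differ by exactly one (because the difference is odd and cannot be $\ge 3$) is the crux, and you use it correctly.

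One small caveat you should flag explicitly: the chain of saturations you exploit (\emph{e.g.}\ that equality in $B\sum x_px_{p+1}\le B\,AC$ forces $\sum x_px_{p+1}=AC$, and likewise for the AM--GM step) requires $B>0$. When $B=0$ both sides of \eqref{diseq} vanish for \emph{any} $x$, and the characterisation a)/b) genuinely fails (take $P=3$, $x_1=x_2=x_3=1$, $b_1=b_2=0$). This is a degeneracy in the lemma's statement rather than in your argument, and it is harmless for the paper's application (where the relevant $b_p$ are built from the strictly positive couplings $\mu_{r,r+1}$); still, for a self-contained proof it is worth stating the assumption $B>0$ when discussing the equality case, alongside the $S=0$ case you already treat.
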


\begin{proof}[Proof of Proposition \ref{prop:rhobound}]
Denote by $\rho$ the spectral radius of the matrix $[M^2]^{(oo)}$. We have
\be \label{rho-norma}
\rho \,\leq\, \left\Vert [M^2]^{(oo)} \right\Vert_\infty \;.
\ee
As $[M^2]^{(oo)}$ is a tridiagonal matrix, its $\infty$-norm can be easily computed leading to
\be \label{norma}
\left\Vert [M^2]^{(oo)} \right\Vert_\infty \,=\,
\max_{r}\,\sum_{s}\, (M^2)_{2r-1,2s-1}
\,=\, \max_{r} \sum_{p=2r-3}^{2r} b_{p}^{(r)}\, \alpha_p\,\alpha_{p+1}\,
\leq\, \frac{{\widehat\mu}^2}{4} \;,\ee
where we set ${\widehat\mu}^2\equiv\max_r\mu_{r,r+1}^2$ and for every $r,p$
\be \begin{split}
b_{p}^{(r)} \,\equiv\,\; & \delta_{p,\,2r-3}\;\mu_{2r-3,\,2r-2}\;\mu_{2r-2,\,2r-1} \,+\,
\delta_{p,\,2r-2}\;\mu_{2r-2,\,2r-1}^2 \,+\\
& +\, \delta_{p,\,2r-1}\;\mu_{2r-1,\,2r}^2 \,+\,
\delta_{p,\,2r}\;\mu_{2r-1,\,2r}\;\mu_{2r,\,2r+1}  \;.
\end{split} \ee
For convenience we set $\alpha_{p}\equiv0$ for $p\notin\{1,\dots,K\}\,$ and $\mu_{p,p+1}\equiv0$ for $p\notin\{1,\dots,K-1\}$.
The last inequality in \eqref{norma} follows by Lemma \ref{lem:diseq}, since $b_p^{(r)}\leq {\widehat\mu}^2$ and  $\sum_p \alpha_p=1\,$.
Therefore $\rho\leq\frac{{\widehat\mu}^2}{4}$ combining inequalities \eqref{rho-norma}, \eqref{norma}.

Now assume that $\rho=\frac{{\widehat\mu}^2}{4}$. In particular the inequality in \eqref{norma} must be saturated, hence there exists $r$ such that
\be 
\sum_{p=2r-3}^{2r}b_{p}^{(r)}\,\alpha_{p}\,\alpha_{p+1} \,=\, \frac{{\widehat\mu}^2}{4} \;.
\ee
Then by Lemma \ref{lem:diseq}, condition \eqref{bestlambda1} or \eqref{bestlambda2} must be verified.

Vice-versa assume that condition \eqref{bestlambda1} or \eqref{bestlambda2} holds true. In this case notice that many of the $\alpha_r$'s are zero, since $\sum_{r=1}^K\alpha_r=1\,$. 
Thus the matrix $[M^2]^{(oo)}$ notably simplifies and one can check directly that $\frac{{\widehat\mu}^2}{4}$ is (the only non-zero) eigenvalue. This proves $\rho=\frac{{\widehat\mu}^2}{4}\,$.
\end{proof}

\begin{remark}\label{remark_alpha->0}
It is not difficult to realize that Theorem \ref{main_theorem} holds also when $\alpha_r\to0$ for some $r$. Indeed, by \eqref{bound_remainder} and Lemma \ref{concentration_lemma} we see that it is sufficient to require:
\begin{align}
    \alpha_r^2\,\mathbb{E}\Big\langle\left(\mathcal{L}_r-\mathbb{E}\langle\mathcal{L}_r\rangle_{N,t}^{(\epsilon)}\right)^2\Big\rangle_{N,t}^{(\epsilon)}\longrightarrow0\quad\text{as }N\to\infty\,.
\end{align}
The proof of Lemma \ref{concentration_lemma} consists in showing that (see inequalities (A.11) and (A.24) in \cite{us}):
\begin{align}
    &\alpha_r^2\,\mathbb{E}_{\boldsymbol{\epsilon}}\mathbb{E}\Big\langle\left(\mathcal{L}_r-\langle\mathcal{L}_r\rangle_{N,t}^{(\epsilon)}\right)^2\Big\rangle_{N,t}^{(\epsilon)}=O\left(\frac{\alpha_r}{Ns_N^K}\right)\\
    &\alpha_r^2\,\mathbb{E}_{\boldsymbol{\epsilon}}\mathbb{E}\left[\left(\langle\mathcal{L}_r\rangle_{N,t}^{(\epsilon)}-\mathbb{E}\langle\mathcal{L}_r\rangle_{N,t}^{(\epsilon)}\right)^2\right]=O\left(\frac{1}{s_N^{4K/3}N^{1/3}}\right)\,.
\end{align}The previous equalities both vanish in the thermodynamic limit with the choice $s_N\sim N^{-1/16K}$ for instance, independently on $\alpha_r$. Hence the remainder of the interpolation in the proof still goes to $0$ with no variation in the hypothesis.

When a form factor, say $\alpha_r$, vanishes the related component of the order parameter $x_r$ disappears from the variational pressure \eqref{p_var_DBM}. Moreover, if the corresponding $L_r$ is an intermediate layer one can see that the system decouples into two independent DBMs because the effective interaction matrix $\Delta$ becomes block diagonal and the convex $\psi$-term related to the mentioned layer is weighed by $\alpha_r$. The global variational pressure is thus constant in $x_r$ in the thermodynamic limit. 
\end{remark}

\subsection{Proof of Theorem \ref{teor_uniq}}
Uniqueness of the solution of the consistency equation for positive external fields can be proven adapting the strategy used in \cite{noi_deep2}, where the replica symmetric equation of a deep Boltzmann machines was proved to admit a unique solution when the couplings and the external fields are centred Gaussian random variables.
In particular the layers structure permits to \virg{decouple} the interactions as shown in the following

\begin{remark}\label{decoupling}
The consistency equation \eqref{eq:CE} is equivalent to the following:
\be \label{eq:CE-a}
\begin{cases}
\,x_r \,=\, \E\tanh\left(z\,\sqrt{\Theta_r(a)\,x_r+h_r\,} \,+\, \Theta_r(a)\,x_r+h_r \right) & r=1,\dots,K\\[8pt]
\,\alpha_r\,x_r\;a_r \,=\, \alpha_{r+1}\,x_{r+1} & r=1,\dots,K-1
\end{cases}
\ee
where we have introduced the auxiliary variables $a_1,\dots,a_{K-1}>0$ and the functions
\be \label{eq:theta}
\Theta_r(a) \,\equiv\, \begin{cases}
\Delta_{12}\,a_1\  & \textrm{for }r=1 \\[5pt]
\dfrac{\Delta_{r,r-1}}{a_{r-1}} \,+\, \Delta_{r,r+1}\,a_r\ & \textrm{for }r=2,\dots,K-1 \\[10pt]
\dfrac{\Delta_{K-1,K}}{a_{K-1}}\ & \textrm{for }r=K \\[5pt]
\end{cases}\;.
\ee
Indeed, using the definition of the matrix $M$, it can be easily verified that $(M \mathbf{x})_r \,=\, \Theta_r(a)\,x_r$ for $r=1,\dots,K\,$, for $a$ satisfying the second relation in \eqref{eq:CE-a}. 
\end{remark}

The proof of Theorem \ref{teor_uniq} relies on the following

\begin{lemma}\label{uni_mono1D}
Let $z$ be a standard Gaussian random variable. For every $t,h>0$ the equation
\begin{equation} \label{ce1D}
x \,=\, \E \tanh\left(z\,\sqrt{t\,x\!+\!h\,}\,+\,t\,x\!+\!h\right)
\end{equation}
has a unique positive solution that we denote by $x=\bar x(t,h)>0\,$.
Moreover $\bar x$ is strictly increasing as a function of both $t>0$ and $h>0$.
\end{lemma}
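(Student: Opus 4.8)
\textbf{Proof proposal for Lemma \ref{uni_mono1D}.}

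The plan is to study the scalar map $G(x) := \E\tanh\bigl(z\sqrt{tx+h}+tx+h\bigr)$ on $[0,\infty)$ and show it has exactly one fixed point, then obtain monotonicity in $t$ and $h$ by an implicit-function / comparison argument. First I would observe that, by the Nishimori-type identity \eqref{tanh=tanh^2}, $G(x)$ equals $F(tx+h)$ where $F$ is the function from Lemma \ref{Fconcave}; indeed $F(u)=\E\tanh(z\sqrt u+u)$. Hence \eqref{ce1D} reads $x=F(tx+h)$. By Lemma \ref{Fconcave}, $F$ is strictly positive, strictly increasing and strictly concave on $(0,\infty)$, with $F(0^+)=0$ and $F<1$ (since $\tanh<1$), and by the remark after that lemma $F^{-1}$ is increasing with $F^{-1}(x)\to+\infty$ as $x\to1^-$. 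Composing with the affine increasing map $u=tx+h$, the function $x\mapsto F(tx+h)$ is itself strictly increasing, strictly concave, takes the value $F(h)>0$ at $x=0$, and stays below $1$. A strictly concave function on $[0,\infty)$ that starts strictly above the diagonal at $x=0$ (because $F(h)>0$) and is bounded must cross the diagonal exactly once: the difference $D(x):=F(tx+h)-x$ satisfies $D(0)=F(h)>0$, $D$ is strictly concave, and $D(x)\to-\infty$ as $x\to\infty$ (since $F$ is bounded), so $D$ has a unique zero $\bar x(t,h)>0$. This gives existence and uniqueness.

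For the monotonicity, I would write the fixed-point relation as $\bar x = F(t\bar x+h)$ and differentiate implicitly. Set $u^*:=t\bar x+h$. Differentiating in $h$: $\partial_h\bar x = F'(u^*)\,(t\,\partial_h\bar x+1)$, so $\partial_h\bar x = F'(u^*)/\bigl(1-t\,F'(u^*)\bigr)$. The denominator is positive: at the unique crossing point the concave function $D$ is strictly decreasing there is at most one crossing only if the curve $F(tx+h)$ has slope $<1$ at $\bar x$ — more precisely, since $D(0)>0$ and $D$ is strictly concave with a unique zero, we must have $D'(\bar x)\le 0$, i.e. $tF'(u^*)\le 1$, and strict concavity together with $D(0)>0$ forces the strict inequality $tF'(u^*)<1$. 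Since $F'(u^*)>0$ as well, $\partial_h\bar x>0$. The same computation in $t$ gives $\partial_t\bar x = F'(u^*)\,\bar x/\bigl(1-tF'(u^*)\bigr)>0$ because $\bar x>0$. (Alternatively, and perhaps cleaner to present, I would avoid differentiability worries by a direct monotone-comparison argument: if $h_1<h_2$ then $x\mapsto F(tx+h_1)$ lies strictly below $x\mapsto F(tx+h_2)$ pointwise; since each has a unique transversal down-crossing of the diagonal, the crossing point moves strictly to the right, giving $\bar x(t,h_1)<\bar x(t,h_2)$; the argument for $t$ is identical once one restricts to $x>0$ where $tx$ is strictly increasing in $t$.)

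The main obstacle, and the only place requiring real care, is the qualitative claim that a strictly concave curve crossing the diagonal can do so only once and ``transversally from above'', so that the slope at the crossing is strictly less than $1$ — this is what makes the implicit-function denominators positive and what powers the comparison argument. This is elementary but must be argued precisely using $D(0)=F(h)>0$, strict concavity of $D$, and $D(+\infty)=-\infty$; I would also need the strictness of concavity of $F$, which comes from \eqref{third_psi_gas} (equivalently $F''=2\psi'''<0$) established earlier. Everything else — positivity, boundedness, $F=2\psi'-1$, the limit $F^{-1}(x)\to\infty$ — is already in hand from Lemma \ref{Fconcave} and the surrounding remarks, so the proof is short once that convexity bookkeeping is in place.
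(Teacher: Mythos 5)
Your proof is correct, and it reaches the same milestones as the paper's: rewrite \eqref{ce1D} as $x=F(tx+h)$ with $F$ from Lemma~\ref{Fconcave}, establish uniqueness from concavity of $F$, and obtain monotonicity by implicit differentiation once the key slope bound $1-tF'(t\bar x+h)>0$ is in hand. The difference is in how you get uniqueness and that slope bound. The paper introduces the auxiliary function $f(x)=\tfrac{1}{x}F(tx+h)$ and shows $x^2f'(x)=txF'(tx+h)-F(tx+h)$ is negative by checking it is negative at $x=0$ and has derivative $t^2xF''(tx+h)<0$; evaluating at $\bar x$ and using $F(t\bar x+h)=\bar x$ then yields $tF'(t\bar x+h)<1$. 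You instead work directly with $D(x)=F(tx+h)-x$: $D(0)=F(h)>0$, $D$ is strictly concave and tends to $-\infty$, so it has a unique zero, and the tangent--secant inequality for a (strictly) concave function gives $D'(\bar x)\le \tfrac{D(\bar x)-D(0)}{\bar x}=-\tfrac{F(h)}{\bar x}<0$, i.e. $tF'(t\bar x+h)<1$. Both routes lean on $F''<0$ (equivalently $\psi'''<0$); yours is a little more geometric and arguably shorter, the paper's is a short computation on $x^2f'(x)$ that packages the same fact. One small point worth making explicit in a final write-up: when you assert ``strict concavity together with $D(0)>0$ forces the strict inequality $tF'(t\bar x+h)<1$,'' the cleanest justification is precisely the secant estimate $D'(\bar x)\le -F(h)/\bar x<0$, which already gives strictness even without invoking strict concavity at that step. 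Your alternative monotone-comparison argument for the dependence on $t,h$ is also valid, but it still implicitly needs the transversality ($D'(\bar x)<0$) to conclude that the crossing point moves continuously and strictly, so the implicit-function route you lead with is the more economical one.
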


\begin{proof}[of Theorem \ref{teor_uniq}]
Equation \eqref{ce1D} rewrites as $x=F(\,t\,x+h)\,$, where $F(h)\equiv\E\tanh(z\sqrt{h}+h)\,$.
By Lemma \ref{Fconcave}, $F$ takes values in (0,1), is strictly increasing and concave. It follows that equation \eqref{ce1D} admits a unique solution in $(0,1)$ and in particular we can show that the function $f(x)\equiv \frac{1}{x}\,F(\,t\,x+h)$ is strictly decreasing for $x>0\,$. Indeed by Lemma \ref{Fconcave} we have:
\begin{align}
x^2\,f'(x) \,&=\, t\,x\,F'(t\,x+h) - F(t\,x+h) \,<0\ \textrm{in }x=0 \;,\\
\frac{\d}{\d x}\,(x^2\,f'(x)) \,&=\, t^2\,x\,F''(t\,x+h) \,<0
\end{align}
hence
\begin{equation} \label{f'}
x^2\,f'(x) \,=\, t\,x\,F'(t\,x+h) - F(t\,x+h) \,<0 \quad\forall\,x>0 \;.
\end{equation}
Now denoting by $\bar x(t,h)$ the unique positive solution of equation \eqref{ce1D}, we can prove its monotonicity with respect to both parameters by differentiating the self-consistent equation
\begin{equation}\label{ce1Dbar}
\bar x(t,h) = F\left(\,t\,\bar x(t,h) +h\right) \;,
\end{equation}
which leads to
\begin{align}
\label{dqdt} \left(1-t\,F'(t\,\bar x+h)\right)\,\frac{\d\bar x}{\d t} \,&=\, \bar x\,F'(t\,\bar x+h) \\
\label{dqdh} \left(1-t\,F'(t\,\bar x+h)\right)\,\frac{\d\bar x}{\d h} \,&=\, F'(t\,\bar x+h) \;.
\end{align}
Lemma \ref{Fconcave} ensures that \eqref{dqdt}, \eqref{dqdh} are positive quantities, hence to conclude it suffices to show that $1-t\,F'(t\,\bar x+h)>0$. Indeed, dividing the inequality \eqref{f'} by $x$, evaluating it at $x=\bar x(t,h)$ and using the self-consistent equation \eqref{ce1Dbar}, one finds precisely:
\begin{equation}
0>\, t\,F'(t\,\bar x+h) - \frac{F(t\,\bar x+h)}{\bar x} \,=\,  t\,F'(t\,\bar x+h) - 1 \;.
\end{equation}
\end{proof}

\begin{proof}[Proof of Theorem \ref{teor_uniq}]
By Lemma \ref{uni_mono1D}, the first line of \eqref{eq:CE-a} is equivalent to:
\be
x_r \,=\, \bar x\,\big(\Theta_r(a),h_r\big) \quad\forall\,r=1,\dots,K
\ee
where $\bar x$ is uniquely defined and strictly increasing with respect to both its arguments.
On the other hand the second line of \eqref{eq:CE-a} rewrites as:
\be
\alpha_1\,x_1\;a_1\cdots a_r \,=\, \alpha_{r+1}\,x_{r+1} \quad\forall\,r=1,\dots,K-1 \;.
\ee
It is convenient to set $X_1(a_1)\,\equiv\, \alpha_1\;\bar x\,\big(\Theta_1(a)\,,\,h_1\big) = \alpha_1\;\bar x\big(\Delta_{1,2}\,a_1\,,\,h_1\big)$ and for $r\geq2$
\be 
X_r\bigg(\frac{1}{a_{r-1}}\,,\,a_{r}\bigg) \,\equiv\, \alpha_r\;\bar x\,\big(\Theta_r(a)\,,\,h_r\big) \,=\,
\alpha_r\;\bar x \bigg(\frac{\Delta_{r,r-1}}{a_{r-1}}+\Delta_{r,r+1}\,a_r\,,\,h_r\bigg) \;.
\ee
Therefore equation \eqref{eq:CE-a} is equivalent to the following:
\be \label{eq:CE-a2}
X_1(a_1)\;a_1\cdots a_r \,=\, X_{r+1}\bigg(\frac{1}{a_{r}}\,,\,a_{r+1}\bigg) \quad\forall\,r=1,\dots,K-1 \,.
\ee
We will show by induction on $r\geq1$ that for any given $a_{r+1}\geq0$ there exists a unique $a_r=\bar a_r(a_{r+1})>0$ such that
\be \label{starp}\begin{cases}
\;a_{r-1} \,=\, \bar a_{r-1}(a_r) \\
\;\vdots \\
\;a_1 \,=\, \bar a_1(a_2) \\[2pt]
\;X_1(a_1)\;a_1\,\cdots\,a_{r-1}\,a_r \,=\, X_{r+1}\bigg(\dfrac{1}{a_r}\,,\,a_{r+1}\bigg) 
\end{cases} \ee
and moreover $\bar a_r$ is a strictly increasing function with respect to $a_{r+1}\,$. The uniqueness of solution of \eqref{eq:CE-a2} will follow immediately by stopping the induction at $r=K-1$ and choosing $a_K=0\,$ and the Theorem will be proven thanks to Remark \ref{decoupling}.\\

$\bullet$ Case $r=1$: given $a_2\geq0$, let's consider the equation
\be \label{star1}
X_1(a_1)\,a_1 \,=\, X_2\bigg(\frac{1}{a_1},a_2\bigg) \;.
\ee
By Lemma \ref{uni_mono1D} the left-hand side of \eqref{star1} is a strictly increasing function of $a_1>0$ and takes all the values in the interval $(0,\infty)$, while the right-hand side is a decreasing function of $a_1>0$ and takes non-negative values.
Therefore there exists a unique $a_1=\bar a_1(a_2)>0$ solution of \eqref{star1}.
Now taking derivatives on both sides of \eqref{star1} and using again Lemma \ref{uni_mono1D}, one finds:
\be 
\frac{d \bar a_1}{d a_2} \,=\,
\frac{\partial}{\partial a_2}X_2\Big(\frac{1}{a_1},a_2\Big) \, \Bigg[\frac{\partial}{\partial a_1}\big(X_1(a_1)\,a_1\big) - \frac{\partial}{\partial a_1}X_2\Big(\frac{1}{a_1},a_2\Big) \Bigg]^{-1}_{\big|a_1=\bar a_1 (a_2)} >0 
\ee
hence $\bar a_1$ is a strictly increasing function of $a_2\,$.\\

$\bullet$ For $r>1\,$, $r-1$ $\Rightarrow$ $r$. Fix $a_{r+1}\geq0\,$.
By inductive hypothesis $\bar a_1,\dots,\bar a_{r-1}$ are well-defined and strictly increasing functions.
Defining the composition $A_l\equiv \bar a_l\circ\dots\circ \bar a_{r-1}$ for every $l=1,\dots,r-1$, equation \eqref{starp} rewrites as:
\be \label{starp2}
\big(X_1\circ A_1\big)(a_r)\; A_1(a_r)\,\cdots\,A_{r-1}(a_r)\; a_r \,=\, X_{r+1}\bigg(\frac{1}{a_r},a_{r+1}\bigg) \;.
\ee
By inductive hypothesis and Lemma \ref{uni_mono1D}, the left-hand side of \eqref{starp2} is a strictly increasing function of $a_r>0$ and takes all the values in the interval $(0,\infty)$, while the right hand-side of \eqref{starp2} is a decreasing function of $a_r>0$ and takes non-negative values. Therefore for every $a_{r+1}\geq0$ there exists a unique $a_r=\bar a_r(a_{r+1})>0$ solution of \eqref{starp2}.
Now taking derivatives on both sides of \eqref{starp2} one finds:
\be \begin{split}
&\frac{d \bar a_r}{d a_{r+1}} \,=\;
\frac{\partial}{\partial a_{r+1}}X_{r+1}\Big(\frac{1}{a_r},a_{r+1}\Big)\; \cdot\\
&\cdot\Bigg[\frac{\partial}{\partial a_r}\bigg(\!\big(X_1\circ A_1\big)(a_r)\;A_1(a_r)\,\cdots\,A_{r-1}(a_r)\;a_r\bigg) - \frac{\partial}{\partial a_r}X_{r+1}\Big(\frac{1}{a_r},a_{r+1}\Big) \Bigg]^{-1}_{\big|\,a_r=\bar a_r(a_{r+1})}
\end{split} \ee
which, using again the inductive hypothesis and Lemma \ref{uni_mono1D}, entails that $\bar a_r$ is a strictly increasing function of $a_{r+1}\,$.
\end{proof}
\vspace{10pt}




\section{Conclusions and perspectives}\label{sect_five}
In this work we have solved the $K$-layer deep restricted Boltzmann machine on the Nishimori line which is an instance of a non-convex multi-species model. The solution consists in the computation of the pressure in the thermodynamic limit which is expressed in terms of an ordinary min-max variational principle over $K$ real positive numbers. The properties of the optimizer show the presence of a phase transition related to the interaction strength and to the relative size of each layer defining the geometry of the system. In particular we discovered that the geometry of the system may tune the phase transition.

A possible way to investigate the model for general values of the parameters would be to test the stability of our results when the system is in a neighborhood of the Nishimori line. We plan to perturb the distribution \eqref{gaussian_couplings_NL} and check under which conditions the replica symmetry property breaks down.

After the completion of this work, paper \cite{Reeves} was brought to our attention where the mutual information for a wide class of inference problems is solved by means of a variational principle. While it is possible to obtain our model as an instance of the one considered there, the variational principle presented has no clear correspondence to ours.
We finally mention that a subsequent work \cite{Mourrat2} 
contains a general result that extends the one in the present paper. In particular the authors compute the limiting free energy with a Hamilton-Jacobi approach which proves to be effective also when dealing with lack of convexity in the interactions.
On the other hand, the simplicity of our setting allows us to carry out a thorough study of the variational formula by locating the phase transition and investigating its dependency on the geometry of the system as in Theorem \ref{phase_transition_thm}, Proposition \ref{prop:rhobound} and Theorem \ref{teor_uniq}.\\

\noindent\textbf{Acknowledgments} The authors thank Jean Barbier and Francesco Guerra for interesting discussions. We acknowledge Jean-Cristophe Mourrat for bringing reference \cite{Reeves} to our attention and an anonymous referee for poiting out to us the preprint \cite{Mourrat2}. P.C. acknowledges support from EU project 952026-Humane-AI-Net.  D.A. and E.M. acknowledge support from Progetto Alma Idea 2018, Università di Bologna.

\printbibliography
\end{document}